 % Notes
% Order of parameters, 'approximate' and 'LDC'
% $2$-query $(\alpha,\delta)$ approximate LDC  <--- I vote for this order with a hyphen between $(\alpha,\delta)$ and 'approximate'
% $2$-query approximate $(\alpha',\delta')$-LDC

\documentclass[11pt]{article}

\usepackage{array}
\usepackage{paralist}
\usepackage{graphicx}
\usepackage{amsmath}
\usepackage{amssymb}
\usepackage{amsthm}
\usepackage{mathtools}
\usepackage{fullpage}
\usepackage{hyperref}
\usepackage{tikz}

\newtheorem{thm}{Theorem}[section]
\newtheorem{claim}[thm]{Claim}
\newtheorem{lem}[thm]{Lemma}
\newtheorem{define}[thm]{Definition}
\newtheorem{cor}[thm]{Corollary}

\renewenvironment{proof}[1][]{
  \begin{trivlist}
   \item[\hspace{\labelsep}{\em\noindent Proof#1:\/}]}
   {{\hfill$\Box$}
  \end{trivlist}}

\DeclareMathOperator*{\E}{\mathbb{E}}
\DeclareMathOperator*{\dir}{dir}
\DeclareMathOperator*{\lev}{lev}
\DeclareMathOperator*{\edg}{Edge}
\DeclareMathOperator*{\spn}{span}
\renewcommand{\vec}[1]{\boldsymbol{#1}}

\DeclareMathOperator{\tr}{\mathsf{tr}}
\DeclareMathOperator{\rank}{\mathsf{rank}}
\DeclareMathOperator{\diag}{\mathsf{diag}}

\newcommand{\ip}[2]{\langle #1,#2 \rangle}
\def\span{\textsf{span}}
\def\weight{\textsf{weight}}

\newcommand{\R}{\mathbb{R}} % reals
\newcommand{\C}{\mathbb{C}} % complex numbers
\newcommand{\N}{\mathbb{N}} % natural numbers
\newcommand{\Z}{\mathbb{Z}} % integers
\newcommand{\F}{\mathbb{F}} % reals

\theoremstyle{remark}

\usepackage{xcolor}

\title{\bf Lower Bounds for Approximate LDCs}
\author{
Jop Bri{\"e}t\thanks{Courant Institute of Mathematical Sciences, New York University. 
Email: \texttt{jop.briet@cims.nyu.edu}.
Supported by a Rubicon grant from the Netherlands Organisation for Scientific Research (NWO).} 
\and
Zeev Dvir\thanks{Department of Computer Science and Department of Mathematics, Princeton University, Princeton NJ.
Email: \texttt{zeev.dvir@gmail.com}. Research partially
supported by NSF grants CCF-0832797, CCF-1217416 and by the Sloan fellowship.}
\and 
Guangda Hu\thanks{Department of Computer Science  Princeton University, Princeton NJ.
Email: \texttt{guangdah@cs.princeton.edu}. Research partially
supported by NSF grants CCF-0832797, CCF-1217416 and by the Sloan fellowship.}
\and
Shubhangi Saraf\thanks{Department of Computer Science and Department of Mathematics, Rutgers University.
 Email: \texttt{shubhangi.saraf@gmail.com}.}
}
\date{}

\begin{document}
\maketitle

\begin{abstract}
We study an approximate version of $q$-query LDCs (Locally Decodable Codes) over the real numbers and prove lower bounds on the encoding length of such codes. A $q$-query $(\alpha,\delta)$-approximate LDC is a set $V$ of $n$ points in  $\R^d$ so that, for each $i \in [d]$ there are $\Omega(\delta n)$ disjoint $q$-tuples  $(\vec{u}_1,\ldots,\vec{u}_q) $  in $V$ so that $\span(\vec{u}_1,\ldots,\vec{u}_q)$ contains a unit vector whose $i$'th coordinate is at least $\alpha$. We prove exponential lower bounds of the form $n \geq 2^{\Omega(\alpha \delta \sqrt{d})}$ for the case $q=2$ and, in some cases, stronger bounds (exponential in $d$). 
\end{abstract}

\section{Introduction}

Error Correcting Codes (ECCs) have always played an important part in the development of theoretical computer science. In particular, many of the foundational results of computational complexity rely in some way or another on constructions and analysis of ECCs (e.g., hardness of approximation, hardness-randomness tradeoffs). The study of ECCs from the perspective of complexity theorists sometimes has different a focus than the traditional information theory viewpoint. One such difference is the study of special kinds of codes that are useful for theory  (i.e., for proving theorems such as the PCP theorem) but were not studied previously.

One such example are Locally-Decodable-Codes (LDCs) which were formally defined in the seminal work of Katz and Trevisan \cite{KT00} (but were implicit in several prior works \cite{Blum-Kannan,Lipton90,BeaverF90}). These are codes that allow the receiver of a (possibly corrupted) encoding $y = C(x) \in \{0,1\}^n$ of a message $x \in \{0,1\}^d$ to probabilistically decode w.h.p a single message bit $x_i$ by reading only $q$ positions in $y$ (which might contain at most $\delta n$ errors). We usually think of $q$ as either a small constant or a very slow growing function of $n$ and of $\delta$ as a constant.
 
 The only case of LDCs which is mostly well understood is that of $2$-query codes (it is easy to see that $1$-query codes do not exist). The Hadamard code $C(x) = (\ip{x}{a})_{a \in \{0,1\}^d}$ is a $2$-query code with exponential encoding length. In \cite{GKST:2006, KdW04} it was shown that this is tight, that is, we always have $n \geq 2^{\Omega(\delta d)}$ for $2$-query codes. For $q>2$ there are huge gaps between the known lower and upper bounds. The best known lower bound is $n=\tilde{\Omega}(d^{1+1/(\lceil q/2\rceil-1)})$ for $q>4$ \cite{Woo07} and $n=\Omega(d^2)$ for $k=3,4$ \cite{KdW04,Woo10}. The best constructions for $q>2$ are given by Matching-Vector codes, which were introduced by Yekhanin in \cite{Yek08} and further developed in
\cite{Efr09,Rag07,KY09,IS10,CFL+10,DGY11,BET10}. These codes have block-length of roughly $n \leq \exp\exp\left((\log d)^{O(\log\log q/\log q)}(\log\log d)\right).$

One important sub-case of LDCs is that of {\em linear codes} (all known constructions are linear as far as we know). That is, the encoding is a linear mapping $C: \F^d \mapsto \F^n$ over some field $\F$. In this case, one can show that w.l.o.g. the decoding is linear as well. More formally, if we let $\vec{v}_1,\ldots,\vec{v}_n \in \F^d$ be the  rows of the generating matrix of $C$ (so that $C(\vec{x})_i = \ip{\vec{x}}{\vec{v}_i}$) then we have that, for each $i \in [d]$ there must exist a matching $M_i$ of at least  $\Omega(\delta n)$ disjoint pairs  $\vec{v}_{j_1},\vec{v}_{j_2}$  that span $\vec{e}_i$ (the $i$'th standard basis vector). To locally decode $x_i$ one can simply pick a random pair in the matching~$M_i$ and calculate:
$$ x_i = \ip{\vec{x}}{\vec{e}_i} = a\ip{\vec{x}}{\vec{v}_{j_1}}+b\ip{\vec{x}}{\vec{v}_{j_2}}$$ for some field elements $a,b$ satisfying $a\vec{v}_{j_1}+b\vec{v}_{j_2}= \vec{e}_i$. In \cite{DS05} is was shown that the lower bound of \cite{GKST:2006} for binary linear codes can be extended to linear codes over any field and so, we know that the Hadamard code cannot be beaten even if we allow for a large alphabet. 

In this work we consider a new notion of linear LDCs  in which the underlying field is the real numbers and the decoding is `approximate'.  Building on the above characterization of linear codes, we will consider arrangements of points $\vec{v}_1,\ldots,\vec{v}_n \in \R^d$ in which, for every $i \in [d]$ there are many disjoint pairs that `almost span' $\vec{e_i}$ in some concrete way (we give exact definitions below). Overall, our results are negative and show that, even if we allow a very loose notion of approximation, the encoding length is still exponential (either in $\sqrt{d}$ or in $d$, depending on the model). We prove several theorems for various settings of the parameters, using a wide  variety of techniques. 

\paragraph{Motivation and related works:} Our motivation for studying this problem comes from several directions. Firstly, one could hope to use approximate codes in practice (if these had sufficiently good parameters). As long as the approximation parameter is not too large we could hope to recover some approximation of $x_i$ using the two queries to the code (assuming $x_i$ is some quantity we are interested in and we don't mind some small error). Another motivation comes from trying to understand $3$-query codes. Here, even if we restrict our attention to real codes over $\R$, there is still an exponential gap between lower and upper bounds. In a recent work, \cite{DSW13}, a subset of the current authors and Avi Wigderson proved an $n > d^{2+\epsilon}$ lower bound (for some positive $\epsilon$) for a closely related notion of $2$-query Locally Correctable Codes (LCCs) over $\R$, improving upon the known quadratic bound. Originally, the proof of \cite{DSW13} used  a reduction from (exact) 3-LCCs over $\R$ to $2$-query approximate LDCs (later, a different proof was found). This raises the possibility that, in the future, perhaps approximate codes will find more applications. We are also motivated by connections to well studied questions in combinatorial geometry. In \cite{BDWY12,DSW12} it was shown that proving lower bounds on LCCs is closely related to questions in the spirit of the Sylvester-Gallai theorem. Here, one tries to take local information about a point configuration (say, many collinear triples)  and convert this information to a global bound on the dimension spanned by the points. We can view some of the theorems in this work in this spirit. Approximate versions of Sylveter-Gallai type theorems and LCCs were recently explored in~\cite{ADSW12}.

\subsection{Definitions and results}

We begin with some notations. A {\em $q$-matching} $M$ in $[n]$ is defined to be a set of disjoint unordered $q$-tuples (i.e. disjoint subsets of size $q$) of $[n]$. We denote by $\vec{e}_i$ the $i$'th standard basis vector in $\R^d$. The standard inner product of two vectors $\vec x,\vec y \in \R^d$ is given by $\ip{\vec x}{\vec y}$ and the $\ell_2$ norm of $\vec x \in \R^d$ is $\|\vec x\|_2 = \sqrt{\ip{\vec x}{\vec x}}$. 

\begin{define}[$\weight_i$]
For a vector $\vec{u} \in \R^d$ we define $\weight_i(\vec{u}) = |\ip{\vec{u}}{\vec{e}_i}|/\|\vec{u}\|_2$ (i.e., the absolute value of the $i$'th coordinate of the normalized vector $\vec{u}/\|\vec{u}\|_2$). 
\end{define}

Clearly we have $\sum_{i\in [d]}\weight_i(\vec{u})^2 = 1$.
We now state our definition of approximate LDC. 

\begin{define}[Approximate LDC]
Let~$d,n,q$ be positive integers and~$\alpha,\delta\in[0,1]$ real numbers. A $q$-query $(\alpha,\delta)$-approximate LDC is a pair $(V,M)$ with
\begin{enumerate}
	\item $V=\{\vec{v}_1,\vec{v}_2,\ldots,\vec{v}_n\}$ a multiset of vectors in $\R^d$. The parameter $n$ is the {\em size} (or {\em block length}) of the code and the parameter $d$ is the {\em dimension} (or  {\em message length}) of the code.
	\item $M = (M_1,\ldots,M_d)$ with each $M_i$ being a $q$-matching in $[n]$ so that, if $\{j_1,\ldots,j_q\} \in M_i$, then there exists $\vec{u} \in \span\{ \vec{v}_{j_1},\ldots,\vec{v}_{j_q} \}$ with $\weight_i(\vec{u}) \geq \alpha$.
\end{enumerate}
The sizes of the matchings $M_i$ must satisfy
$|M_1|+|M_2|+\cdots+|M_d|\geq\delta dn$
and the parameter $\delta$ is called the {\em density} of the code\footnote{The traditional definition would ask for each $M_i$ to be of size at least $\delta n$ but our definition is more general, which makes our (negative) results stronger.}.
\end{define}

Our first theorem gives an exponential bound on the block length of approximate $2$-LDCs for any $\alpha>0$. Notice that the bound gets worse as $\alpha$ approaches $1/\sqrt{d}$, at which point we cannot expect any lower bound to hold (since a single vector $\vec{u}$ can have $\weight_i(\vec{u})\geq 1/\sqrt{d}$ for all $i \in [d]$).

\begin{thm} \label{thm:general}[General lower bound]
A $2$-query $(\alpha,\delta)$-approximate LDC of size~$n$ and dimension~$d$ must satisfy $n \geq 2^{\Omega(\alpha\delta\sqrt{d})}.$
\end{thm}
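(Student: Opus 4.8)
The plan is to reduce to a spectral/dimension-counting argument in the spirit of the known Hadamard lower bound (\cite{GKST:2006,KdW04}), but carried out over $\R$ with the approximate-spanning condition. First I would fix an index $i \in [d]$ with a large matching $M_i$; by averaging over the $d$ indices and using $\sum_i |M_i| \ge \delta d n$, at least a $\delta/2$-fraction of the indices $i$ satisfy $|M_i| \ge (\delta/2)n$, so we may as well assume every $M_i$ we use has size $\Omega(\delta n)$. For each pair $\{j_1,j_2\}\in M_i$ there is a unit vector $\vec{u}$ in the $2$-dimensional span of $\vec{v}_{j_1},\vec{v}_{j_2}$ with $\weight_i(\vec u)\ge \alpha$; the key point is to extract from this pair a single \emph{rank-one or rank-two} matrix that ``witnesses'' a large $i$-th coordinate. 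Concretely, I would associate to the pair the orthogonal projection $P_{j_1,j_2}$ onto $\span\{\vec v_{j_1},\vec v_{j_2}\}$ and note that $\ip{\vec e_i}{P_{j_1,j_2}\vec e_i} \ge \alpha^2$, while $P_{j_1,j_2}$ has rank at most $2$ and trace at most $2$.

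Next I would combine these witnesses additively. Summing over a fixed matching $M_i$, the matrices $\{P_{j_1,j_2}: \{j_1,j_2\}\in M_i\}$ act on \emph{disjoint} coordinate blocks of an auxiliary space indexed by $[n]$ — this disjointness is exactly what the matching buys us and is the analogue of the ``each codeword position used once'' trick in the classical proof. I would set up a single $n$-indexed construction: think of assigning to each $\vec v_j$ a vector in $\R^n$ (or work directly with the $\ell_2$ geometry of the $n$ points), so that the union over all $d$ indices of all these local witnesses gives, for each $i$, a PSD matrix $A_i$ supported on the coordinates touched by $M_i$ with $\ip{\vec e_i}{A_i \vec e_i}$ large and $\tr(A_i) = O(|M_i|)$. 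The target inequality should come from bounding $\tr\big(\sum_i c_i A_i\big)$ above by $O(n)$ (since the supports over a single matching are disjoint, and there are $d$ matchings) while bounding the relevant quadratic form from below by something like $\Omega(\alpha^2 \sum_i |M_i|/\text{(dimension)})$, and then reading off $n \ge 2^{\Omega(\alpha\delta\sqrt d)}$ from a ``many near-orthogonal-in-expectation directions forces large ambient dimension, hence large $n$'' estimate.

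The mechanism that produces the $\sqrt d$ in the exponent (rather than $d$) is the quadratic loss inherent in approximate spanning: each witness only controls $\weight_i$, i.e. the \emph{square} of the $i$-th coordinate, so the natural potential is a sum of squares and one should expect to ``lose a square root'' relative to the exact case. I would therefore look for a Bonami–Beckner / hypercontractivity or moment-based inequality: embed the $n$ points into a function space, use the matchings to show a degree-$2$ (or low-degree) function has both large $\|\cdot\|_2$ and, via the disjointness, controlled $\|\cdot\|_1$ or $\|\cdot\|_4$, and conclude the embedding dimension — hence $\log n$ — is $\Omega(\alpha\delta\sqrt d)$. An alternative route, which I would try in parallel, is a direct random-restriction argument: restrict to a random subset $S\subseteq[d]$ of size $\approx 1/\alpha^2$, argue that with good probability $\Omega(\delta n)$ of the matched pairs still witness \emph{some} coordinate in $S$ with weight close to $1$ (using $\sum_i \weight_i(\vec u)^2 = 1$ to control how the weight mass spreads), thereby reducing to an essentially exact LDC on $\approx 1/\alpha^2$ coordinates but now needing the bound to compound over $\approx \alpha^2 d$ disjoint blocks — giving $n \ge 2^{\Omega(\delta)}$ per block and $2^{\Omega(\alpha^2\delta d)}$ overall, which one then trades against a union bound to land on $2^{\Omega(\alpha\delta\sqrt d)}$.

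The main obstacle I anticipate is the passage from ``approximate span contains a vector with $\weight_i \ge \alpha$'' to a usable \emph{linear-algebraic} invariant: a single $2$-dimensional span witnessing weight $\alpha$ in coordinate $i$ is a very weak local constraint, and unlike the exact case there is no clean duality giving a linear combination equal to $\vec e_i$. Controlling the interaction between the $d$ different matchings — they live on the same $n$ points but are otherwise unrelated — while not letting the approximation slack accumulate across the $d$ coordinates is where the quantitative $\alpha\delta\sqrt d$ bound will be won or lost. I expect the correct handle is the trace/rank bookkeeping above combined with a concentration step (hypercontractivity or a matrix Chernoff bound over the random restriction), and getting the exponents to come out as $\alpha\delta\sqrt d$ rather than, say, $\alpha^2\delta d$ or $\alpha\delta d$ will require choosing the restriction size and the moment exactly at the optimal trade-off point.
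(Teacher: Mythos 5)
There is a genuine gap: none of the routes you sketch actually produces an exponential bound, and the two mechanisms the paper relies on are absent. First, your main proposal --- associating to each matched pair the rank-$2$ projection $P_{j_1,j_2}$ onto $\span\{\vec v_{j_1},\vec v_{j_2}\}$ and doing trace bookkeeping --- is intrinsically a \emph{polynomial}-bound technique. These projections live in $\R^{d\times d}$, and the disjointness of a matching $M_i$ as index pairs in $[n]$ does \emph{not} make the corresponding subspaces of $\R^d$ disjoint or the matrices block-supported, so the inequality you hope for ($\tr$ of a sum bounded by $O(n)$ against a quadratic form bounded below by $\Omega(\alpha^2\sum_i|M_i|)$) either fails or collapses to a tautology of the form $\alpha^2|M_i|\le 2|M_i|$. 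This is exactly the style of argument the paper uses in Section~\ref{sec:qquery} (Lemma~\ref{lem:kt}), and there it yields only $n=\Omega((\alpha^2\delta^{1/2}d)^{2})$ for $q=2$. Your fallback random-restriction route is also not coherent as stated: if each of $\approx\alpha^2 d$ blocks contributed a multiplicative $2^{\Omega(\delta)}$ you would get $2^{\Omega(\alpha^2\delta d)}$, which is the paper's open \emph{conjecture}, and ``trading against a union bound'' cannot legitimately convert that into the weaker claimed bound.

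What is missing is twofold. (i) The reduction to \emph{simple} codes (Theorem~\ref{thm-reducesimple}): one first arranges, by deleting the few pairs where an endpoint itself has large $i$-th weight and by adding $\pm$ copies of the normalized vectors, that for every surviving pair the \emph{difference} $\vec v_{j_2}-\vec v_{j_1}$ has $\weight_i\ge\alpha'$. This is the step that converts the weak ``some linear combination has large weight'' hypothesis into a usable geometric statement about edge directions, and you correctly identify this passage as the main obstacle but do not resolve it. (ii) The exponential gain comes from a \emph{recursive partitioning} argument, not from a one-shot spectral inequality: one cuts the point set by a random hyperplane orthogonal to a random coordinate axis, observes via Cauchy--Schwarz ($\|\vec u\|_1\le\sqrt d\,\|\vec u\|_2$) that an edge is cut ``in its own direction'' with probability at least $\tfrac{\alpha}{\sqrt d}$ times the probability it is cut at all, and uses the matching property to bound the number of same-direction cut edges by $\min\{|S_1|,|S_2|\}$; feeding the resulting cut bound $|\edg(S_1,S_2)|\le\tfrac{2\sqrt d}{\alpha}\min\{|S_1|,|S_2|\}$ into the recursive cut lemma of~\cite{GKST:2006} gives $\delta d n\le\tfrac{2\sqrt d}{\alpha}\,n\log_2 n$ and hence $n\ge 2^{\Omega(\alpha\delta\sqrt d)}$. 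The $\sqrt d$ is produced by this $\ell_1$-versus-$\ell_2$ comparison, not by a ``square-root loss from squared weights'' or by hypercontractivity.
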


We could hope to replace the exponential dependence on $\sqrt{d}$ with an exponential dependence on $d$ (as is the case with exact $2$-LDCs). In fact, we conjecture that a general bound of the form $n \geq \exp(\delta\alpha^2 d)$ should hold (the quadratic dependence on $\alpha$ is necessary to avoid hitting the $\alpha = 1/\sqrt{d}$ barrier). Currently, we are only able to prove this conjecture when $\alpha $ is sufficiently close to $1$. This is stated in the next theorem.

\begin{thm} \label{thm:special}[Lower bound for large $\alpha$]
Let $\alpha_0 = \sqrt{1-1/(4\pi^2)}\approx0.987$. A $2$-query   $(\alpha,\delta)$-approximate LDC of size $n$, dimension $d$ and $\alpha>\alpha_0$ must satisfy $n \geq 2^{\Omega(\delta d)},$ where the hidden constant in the~$\Omega(\cdot)$ depends on $\alpha - \alpha_0$.
\end{thm}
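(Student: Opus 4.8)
The plan is to adapt the quantum random-access-code argument of Kerenidis and de Wolf~\cite{KdW04} — which is the source of the $2^{\Omega(\delta d)}$ bound for \emph{exact} linear $2$-LDCs — to the real, approximate setting, the extra robustness being exactly what $\alpha>\alpha_0$ buys. First I would rescale so that $\|\vec v_j\|_2=1$ for every $j$ (this changes no span, hence preserves the code). For each $i$ and each pair $p=\{j_1,j_2\}\in M_i$, fix a unit vector $\vec u_p\in\span\{\vec v_{j_1},\vec v_{j_2}\}$ with $|(\vec u_p)_i|\ge\alpha$ and write $\vec u_p=(\vec u_p)_i\,\vec e_i+\vec u_p^{\perp}$ with $\vec u_p^{\perp}\perp\vec e_i$, so that $\|\vec u_p^{\perp}\|_2\le\sqrt{1-\alpha^2}<\tfrac1{2\pi}$. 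For $\vec x\in\{-1,+1\}^d$ this yields the exact identity $\langle\vec x,\vec u_p\rangle=(\vec u_p)_i\,x_i+\langle\vec x,\vec u_p^{\perp}\rangle$, where the ``noise'' term is a signed sum of variance $\le 1-\alpha^2$.

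Next I would encode a uniformly random $\vec x\in\{-1,+1\}^d$ into a quantum state $|\phi_{\vec x}\rangle$ on $O(\log n)$ qubits, whose amplitudes form a normalized, reference-shifted encoding of the codeword $(\langle\vec x,\vec v_j\rangle)_j$, set up so that an interference measurement inside the two-dimensional subspace indexed by a pair $p$ — taken in the basis that expresses $\vec u_p$ through $\vec v_{j_1},\vec v_{j_2}$ — produces an outcome biased according to $\langle\vec x,\vec u_p\rangle$. To decode $x_i$ one performs in one shot the von Neumann measurement assembled from these pair-measurements over all $p\in M_i$ (legitimate because $M_i$ is a matching) together with the remaining standard basis vectors; on a ``pair'' outcome $p$ one outputs the value of $x_i$ predicted by $\mathrm{sgn}(\langle\vec x,\vec u_p\rangle)$ and $\mathrm{sgn}((\vec u_p)_i)$, and on any other outcome a uniform bit. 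This prediction is wrong only when $|\langle\vec x,\vec u_p^{\perp}\rangle|\ge\alpha$, and here is where $\alpha>\alpha_0$ enters: since $1-\alpha^2<\tfrac1{4\pi^2}$, a Hoeffding bound makes that failure probability small enough that, averaging over $\vec x$ and over $p\in M_i$, the recovery bias is at least $c(\alpha)\cdot|M_i|/n$ for a constant $c(\alpha)>0$ which degrades to $0$ as $\alpha\downarrow\alpha_0$ — the threshold being precisely where the trigonometric/Fourier coefficient governing the bias turns positive, which is the origin of the $\tfrac1{2\pi}$.

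Nayak's bound on quantum random access codes then forces $\log n$ to be at least a constant times $\sum_i\bigl(1-H(\tfrac12+c(\alpha)|M_i|/n)\bigr)$; combined with $\sum_i|M_i|\ge\delta d n$ this is $\Omega(\delta d)$ with implied constant a function of $\alpha$, giving $n\ge 2^{\Omega(\delta d)}$ where the hidden constant depends on $\alpha-\alpha_0$.

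The main obstacle I anticipate is the encoding/measurement analysis itself. Unlike the binary case there is no exact XOR relation, so the pair-decoding is intrinsically noisy, and one must (i) choose the amplitudes and the reference coordinate so that the interference measurement reads off the \emph{sign} of $\langle\vec x,\vec u_p\rangle$ rather than only its magnitude — this is what pins down the constant $\tfrac1{2\pi}$ — and (ii) deal with pairs whose vectors $\vec v_{j_1},\vec v_{j_2}$ are nearly parallel, where the information about $x_i$ sits in a tiny cancellation and contributes almost nothing, so that discarding them must not destroy the count $\sum_i|M_i|\ge\delta d n$. A conceivable alternative, if it can be made lossless, is instead to round each near-degenerate pair-span to one that contains $\vec e_i$ exactly — possible since $\vec u_p$ lies within angle $\arcsin\sqrt{1-\alpha^2}$ of $\vec e_i$ — and then invoke the known exact linear $2$-LDC lower bound; the catch there is carrying out the rounding consistently across pairs that share a vector, without inflating $n$ or shrinking $\delta$.
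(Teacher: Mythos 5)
Your approach is genuinely different from the paper's, and it has a gap that I do not think can be patched within your framework. The paper's proof of Theorem~\ref{thm:special} does not use quantum random access codes at all. It first reduces to a \emph{simple} code via Theorem~\ref{thm-reducesimple}, and then invokes the randomized tiling of Kindler--O'Donnell--Rao--Wigderson~\cite{KORW12}: a tiling of $\R^d$ by grid-periodic cells so that two points $\vec{x},\vec{y}$ land in different cells with probability at most $2\pi\|\vec x-\vec y\|_2/g$. Edges are bucketed by length, one independent tiling is generated per length scale, and an edge is ``good'' when the cells containing its endpoints are adjacent in the coordinate direction of the edge at its own scale and coincide at all coarser scales. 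The constant $\alpha_0=\sqrt{1-1/(4\pi^2)}$ arises \emph{precisely} from the $2\pi$ in the KORW surface-area bound: for a good edge the relevant displacement has norm roughly $\sqrt{1-\alpha^2}\cdot g$, so the failure probability is roughly $2\pi\sqrt{1-\alpha^2}$, which drops below $1$ exactly when $\alpha>\alpha_0$. Once a constant fraction of edges are good, the cut argument of Lemma~\ref{lem:cut} applies with $c=1$ (cutting along a grid hyperplane separates at most one matching's worth of edges), giving $n\geq 2^{\Omega(\delta d)}$.

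Your attempt to derive the same threshold from a Hoeffding bound does not work: if the only requirement were that $\langle \vec x,\vec u_p^\perp\rangle$ rarely exceed $\alpha$ in magnitude, any $\alpha>1/\sqrt 2$ would do, and no factor of $\pi$ appears. You gesture at $\tfrac{1}{2\pi}$ coming from ``the trigonometric/Fourier coefficient governing the bias,'' but never derive it, and indeed the only place a $2\pi$ naturally enters this circle of ideas is through the KORW tiling. More fundamentally, the Kerenidis--de~Wolf quantum argument needs a quantum state with amplitudes of uniform magnitude (phases encoding bits); for a real linear code the codeword entries $\langle\vec x,\vec v_j\rangle$ are unbounded reals, and the ``reference-shifted encoding'' you propose is not specified in a way that makes the pair-measurement read off $\mathrm{sgn}\langle\vec x,\vec u_p\rangle$. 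If you instead use phases $e^{-i\langle\vec x,\vec v_j\rangle}$, you recreate the approach the paper uses in Section~\ref{sec:cbounded} --- but that argument needs the $c$-bounded hypothesis precisely to control the range of $\|\vec v_{j_1}-\vec v_{j_2}\|$ appearing in the exponent, and Theorem~\ref{thm:special} makes no such assumption. Finally, your fallback of rounding near-degenerate pair-spans so that they contain $\vec e_i$ exactly would at best reprove the known exact $2$-LDC bound, and you already note the consistency obstruction; the paper's tiling-based rounding is exactly the device that performs a consistent global rounding while controlling the loss, so this is not a shortcut around the paper's main idea but rather a restatement of the difficulty it solves.
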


There is another special case where we can get an exponential dependence on $d$ instead of $\sqrt{d}$. It is a natural  restriction of the general definition but it requires two new notions (that will be useful in their own right down the road). The first is that of a {\em simple} code (we will only care about $2$-query codes).

\begin{define}[Simple code]
Let  $(V,M)$ be a $2$-query $(\alpha,\delta)$-approximate LDC.  We say that $(V,M)$ is a {\em simple} code if, for every $i\in[d]$ and $\{j_1,j_2\}\in M_i$ we have $\weight_i(\vec{v}_{j_2} - \vec{v}_{j_1} ) \geq \alpha$.
\end{define}

In other words, a simple code is an arrangements of points in $\R^d$ so that, for any $i\in [d]$ there are $\approx \delta n$ (on average) disjoint pairs of points that `point' in a direction that has projection at least~$\alpha$ on the $i$'th axis. An example of such an arrangement is the boolean cube $\{0,1\}^d \subset \R^d$ (all zero/one vectors), where $M_i$ consists of all $n/2$ pairs that differ only in the $i$'th entry (so $\alpha=1$).

Another feature of the hypercube is that all the  distances between pairs in $M_1,\ldots,M_d$ are equal (they all equal one), motivating the following definition.

\begin{define}[$c$-bounded]
Let  $c\geq 1$ and let $(V,M)$ be a $2$-query $(\alpha,\delta)$-approximate LDC.  We say that $(V,M)$ is  {\em $c$-bounded} if, for every $i\in[d]$ and $\{j_1,j_2\}\in M_i$ we have $\|\vec{v}_{j_2} - \vec{v}_{j_1}\| \in [1,c].$
\end{define}

The fact that the hypercube is both $c$-bounded (with $c=1$) and simple motivates the study of structures that satisfy these two conditions. In particular, we ask whether there exists a point arrangement in $\R^d$ which is `roughly' like the hypercube but has far fewer than~$2^d$ points. Here, the notion of `roughly' is captured by allowing pairwise distance to be `close' to 1 and the differences between adjacent vertices to be only somewhat axis parallel. The following theorem shows that such configurations do not exist (that is, you cannot beat the hypercube by much).
 
\begin{thm} \label{thm:bounded}[Lower bound for simple $c$-bounded LDCs]
A $2$-query $c$-bounded simple $(\alpha,\delta)$-approximate LDC of size $n$ and dimension $d$ must satisfy $n \geq 2^{\Omega(\alpha^2\delta^2d/(\log c)^2)}.$
\end{thm}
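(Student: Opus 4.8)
The plan is to reduce, by a dyadic pigeonhole on the lengths of the difference vectors, to the case where the boundedness parameter is an absolute constant, and then to prove the bound $n\ge 2^{\Omega(\alpha^2\delta^2 d)}$ for $2$-bounded simple codes by an information-theoretic argument. \textbf{Step 1 (reducing $c$ to a constant).} For each $i$ classify every pair $\{j_1,j_2\}\in M_i$ by the dyadic scale $\lceil\log_2\|\vec v_{j_2}-\vec v_{j_1}\|\rceil\in\{0,\dots,\lceil\log_2 c\rceil\}$ of its difference vector (all lengths lie in $[1,c]$ by $c$-boundedness). Since $\sum_i|M_i|\ge\delta dn$ and there are at most $\log_2 c+2$ scales, some scale $t$ retains submatchings $M_i'\subseteq M_i$ with $\sum_i|M_i'|\ge\delta dn/(\log_2 c+2)$, all of whose difference vectors have length in $[2^{t-1},2^t]$. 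Rescaling the whole configuration by $2^{1-t}$ (which preserves $\weight_i$, hence simplicity) yields a $2$-bounded simple $(\alpha,\delta')$-approximate LDC on the same $n$ points with $\delta'\ge\delta/(\log_2 c+2)$. So it suffices to prove that any $2$-bounded simple $(\alpha,\delta')$-approximate LDC has $n\ge 2^{\Omega(\alpha^2\delta'^2 d)}$; substituting the value of $\delta'$ then gives the theorem, the square on $\log c$ coming from the square on $\delta'$.

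\textbf{Step 2 (a regular labelled graph).} Let $G$ be the multigraph on $[n]$ whose edges are the pairs in $M_1',\dots,M_d'$, each pair carrying the label of the matching it comes from; since each $M_i'$ is a matching, the edges incident to any vertex carry pairwise distinct labels. As $G$ has at least $\delta' dn$ edges its average degree is at least $2\delta' d$, so by the standard peeling argument $G$ has a subgraph $G'$ on a vertex set $U\subseteq[n]$ of minimum degree $m:=\delta' d$. Thus in $G'$ every vertex $\vec v$ admits, for each of at least $m$ distinct coordinates $i$, a unique neighbour $\pi_i(\vec v)$ with $\|\pi_i(\vec v)-\vec v\|\le 2$ and $|(\pi_i(\vec v)-\vec v)_i|\ge\alpha$. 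Moreover $|E(G')|\ge m|U|/2$, so the average over $i$ of the fraction of $U$ covered by $M_i'$ inside $G'$ is $\Omega(\delta')$. Since $m^2/d=\delta'^2 d$, it is enough to show $|U|\ge 2^{\Omega(\alpha^2 m^2/d)}$.

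\textbf{Step 3 (entropy of a random vertex).} Let $\vec V$ be uniform on $U$, so that $\log|U|=H(\vec V)$. For each coordinate $i$ define $B_i(\vec V)\in\{0,1,\star\}$: if $\vec V$ is covered by $M_i'$ inside $G'$, let $B_i(\vec V)$ record which of $\vec V,\pi_i(\vec V)$ has the larger $i$-th coordinate, and put $B_i(\vec V)=\star$ otherwise. As $(B_1(\vec V),\dots,B_d(\vec V))$ is a deterministic function of $\vec V$, the chain rule gives $H(\vec V)\ge\sum_{i=1}^{d}H\big(B_i(\vec V)\mid B_1(\vec V),\dots,B_{i-1}(\vec V)\big)$, and the heart of the proof is that these terms sum to $\Omega(\alpha^2 m^2/d)$. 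For a single term one conditions on the history and on $\vec V$ being matched in $M_i'$ and argues that $B_i(\vec V)$ remains close to balanced: within a matched pair the two endpoints lie on opposite sides of the $i$-th coordinate, so the only way the history could predict $B_i$ would be for the matchings to be strongly ``aligned'', which a second-moment / anti-concentration estimate on signed sums of the short, $\alpha$-axis-concentrated step vectors rules out at scale $\Omega(\alpha^2)$. Tracking carefully how much the history can nonetheless shave off is what replaces the naive count of $\Omega(\alpha^2)$ bits per matched coordinate by $\Omega(\alpha^2\cdot m/d)$ and yields $H(\vec V)=\Omega(\alpha^2 m^2/d)=\Omega(\alpha^2\delta'^2 d)$.

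\textbf{Main obstacle.} The delicate point is precisely this last step: quantitatively ruling out that the bits $B_i(\vec V)$ almost determine one another, i.e. that conditioning on the history collapses the conditional distribution of $\vec V$ in the $i$-th coordinate. This is exactly where the hypotheses ``simple'' and ``$c$-bounded'' are essential, since together they force the displacement vectors to be simultaneously short and $\alpha$-concentrated on many distinct axes; it is also the apparent source of the (presumably suboptimal) quadratic dependence on $\delta$. A plausible alternative route is to round each coordinate at an independent random threshold so as to binarise the configuration and invoke the exact binary $2$-LDC lower bound $n\ge 2^{\Omega(\delta d)}$ of \cite{GKST:2006}; there the obstacle instead becomes controlling the total rounding error contributed by the non-axis components of the many difference vectors.
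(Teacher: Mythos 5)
Your Step 1 is exactly the paper's Lemma~\ref{lem:bdred}: a dyadic pigeonhole on edge lengths reduces a $c$-bounded simple code to a $2$-bounded one with $\delta'\geq\delta/\lceil\log_2 c\rceil$, and the $(\log c)^2$ in the exponent comes from the square on $\delta'$. That part is correct and matches the paper.

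Steps 2--3 take a genuinely different route, and the difference matters because the paper's argument is fully worked out while yours is not. The paper proves the $2$-bounded case by Hermite--Fourier analysis over the Gaussian measure: from the unit-normalized code it forms the rank-one Hermitian matrix-valued function $F(\vec x)=f(\vec x)f(\vec x)^*$ with $f(\vec x)=(e^{-i\langle \vec x,\vec v_s\rangle})_{s=1}^n$, observes that the linear Hermite coefficient $\widehat F(\vec e_i)$ has $|M_i|$ disjoint entries of magnitude $\geq |\langle\vec e_i,\vec v_s-\vec v_t\rangle|\,e^{-\|\vec v_s-\vec v_t\|^2/2}\geq\alpha/e^2$ (this is precisely where $2$-boundedness enters, to keep the Gaussian weight bounded below), lower bounds $\|\widehat F(\vec e_i)\|_{S_1}\geq(\alpha/e^2)|M_i|$, and finishes with a trace-norm Fourier inequality
\[
\Bigl(\sum_{i=1}^d\|\widehat F(\vec e_i)\|_{S_1}^2\Bigr)^{1/2}\leq\sqrt{2\log(2en)}\,\Bigl(\E_{\vec x\in_\gamma\R^d}\|F(\vec x)\|_{S_1}^2\Bigr)^{1/2}
\]
proved via the non-commutative Khintchine inequality, using $\|F(\vec x)\|_{S_1}=n$ pointwise. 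Combined with Cauchy--Schwarz this gives $\alpha\delta'\sqrt d\,n\lesssim\sqrt{\log n}\,n$ and hence $n\geq 2^{\Omega(\alpha^2\delta'^2 d)}$.

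The gap in your proposal is Step 3, and you flag it yourself. The chain-rule inequality $\log|U|=H(\vec V)\geq\sum_i H(B_i\mid B_{<i})$ is sound, but the crux --- that the conditional terms sum to $\Omega(\alpha^2 m^2/d)$ --- is asserted, not proved. You appeal to ``a second-moment / anti-concentration estimate on signed sums of the short, $\alpha$-axis-concentrated step vectors,'' but no such estimate is stated, no random sum is identified, and it is not clear why conditioning on the history $(B_1,\dots,B_{i-1})$ leaves $\Omega(\alpha^2\cdot m/d)$ bits per coordinate rather than, say, collapsing the conditional distribution entirely. Note that the unconditional entropy $H(B_i)$ is governed by the matched fraction $p_i$, not by $\alpha$; the factor $\alpha^2$ has to emerge purely from the conditioning argument, and nothing in the sketch pins down how. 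As written, the proposal reduces the theorem to an unproved entropy lemma whose truth is not evident; the paper's spectral argument sidesteps exactly this difficulty by converting the geometric simplicity into quantitative lower bounds on matrix Fourier coefficients, where non-commutative Khintchine gives the needed $\sqrt{\log n}$ saving directly.
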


Finally, we consider arbitrary $q$-query approximate codes and observe that the lower bound proof of \cite{KT00} can be made to work also for approximate LDCs (with some additional work). This gives the following theorem.

\begin{thm} \label{thm:qquery}
Let $q\geq 1$ be an integer constant. A $q$-query $(\alpha,\delta)$-approximate LDC of size $n$ and dimension $d$ must satisfy  $n \geq \Omega((\alpha^2\delta^{1/q}d)^{\frac{q}{q-1}}).$
\end{thm}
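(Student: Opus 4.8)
The plan is to adapt the Katz–Trevisan argument to the approximate, real-valued setting. First I would fix an index $i\in[d]$ and work with the matching $M_i$, which—after a standard averaging step—may be assumed to have size $|M_i|=\Omega(\delta n)$ for a constant fraction of the indices $i$ (using $\sum_i|M_i|\geq \delta dn$). For each pair $\{j_1,j_2\}\in M_i$ we are promised a unit vector $\vec u$ in the span with $|\ip{\vec u}{\vec e_i}|\geq\alpha$. The key difficulty compared to the exact case is that a pair is no longer guaranteed to "recover" $x_i$ exactly; instead, thinking of a random message $\vec x$, the linear functional $\ip{\vec u}{\vec x}$ is $\alpha$-correlated with $x_i$ in a suitable sense. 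So the first conceptual step is to set up a randomized decoder: pick $\{j_1,j_2\}\in M_i$ uniformly, take the promised $\vec u=a\vec v_{j_1}+b\vec v_{j_2}$, and output (a thresholded or raw version of) $a\ip{\vec v_{j_1}}{\vec x}+b\ip{\vec v_{j_2}}{\vec x}$; the promise $\weight_i(\vec u)\ge\alpha$ must be converted into a quantitative statement that this decoder has advantage $\Omega(\alpha)$ over random guessing for predicting a random $\pm1$ (or Gaussian) coordinate $x_i$, when the other coordinates are random.

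Next I would run the Katz–Trevisan "smoothness/recovery graph" counting argument. Because the $q$-tuples in $M_i$ are disjoint, a uniformly random coordinate $j\in[n]$ lies in at most one tuple of $M_i$, and it lies in \emph{some} tuple with probability $q|M_i|/n=\Omega(\delta)$. Hence, by averaging, for a constant fraction of the indices $i$ there is a set $G_i\subseteq[n]$ of size $\Omega(\delta n)$ such that each $j\in G_i$ participates (together with its $q-1$ partners) in decoding $x_i$ with the $\Omega(\alpha)$ advantage. This is exactly the "smooth decoder" hypothesis needed for the combinatorial bound: one shows that the sets of query-tuples, indexed over all $i$, cannot all be spread out over a small universe $[n]$ without some coordinate $y_j$ of the codeword carrying too much information about too many message bits $x_i$. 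Formalizing "information" here is the place where the approximate setting needs care: instead of a clean entropy argument I would quantify it via a second-moment / covariance bound—if $y_j=\ip{\vec v_j}{\vec x}$ and the decoders using coordinate $j$ have advantage $\alpha$ for $k_j$ different message bits, then these $k_j$ bits each correlate with the low-dimensional vector of values $(y_{j'})_{j'\text{ a partner of }j}$, forcing $k_j=O(q/\alpha^2)$ by a rank/Parseval-type inequality on the relevant $(q)$-dimensional space of linear functionals. Summing $\sum_j k_j$ from both sides ($\sum_i |G_i| = \Omega(\alpha^?\delta^? d\cdot n)$ many (coordinate, bit) incidences on one side, at most $n\cdot O(q/\alpha^2)$ on the other) and being careful to also lose a factor from the averaging over $i$ yields a bound of the form $n\ge \Omega\!\big((\alpha^2\delta^{1/q}d)^{q/(q-1)}\big)$; the fractional exponent $q/(q-1)$ arises, exactly as in \cite{KT00}, from iterating the argument $q-1$ times (each iteration removing one coordinate from the active tuples and paying a $\delta^{1/q}$-type loss in density).

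The main obstacle I anticipate is the translation of the geometric promise "$\span$ contains a unit vector with $i$th weight $\ge\alpha$" into a usable statistical-advantage statement for a decoder, and doing so in a way that survives the $q$-fold iteration. In the exact case the partner coordinates literally determine $x_i$; here one only gets correlation, and correlations can degrade when conditioning on partial information during the iteration, so I would need to choose the message distribution (likely i.i.d. standard Gaussian, so that linear functionals behave cleanly and $\weight_i$ directly controls covariances) and track how the advantage $\alpha$ interacts with the removal of coordinates. A secondary technical point is handling the averaging losses carefully so that the $\alpha^2$ and $\delta^{1/q}$ powers come out as stated rather than weaker; I expect this to be bookkeeping once the covariance bound $k_j=O(q/\alpha^2)$ and the smooth-decoder reduction are in place.
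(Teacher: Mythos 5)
Your local ingredient is essentially the right one: for a single $q$-tuple, the ($\le q$)-dimensional span of its vectors can contain unit vectors with $i$'th weight at least $\alpha$ for at most $q/\alpha^2$ distinct indices $i$. This is exactly the paper's Lemma~\ref{lem:kt}, proved by a trace/singular-value computation: if $k$ unit vectors lying in a space of dimension $r$ each have a different coordinate of magnitude at least $\alpha$, then $(\alpha k)^2\le\tr(\Sigma)^2\le r\,\|U\|_F^2\le rk$, so $r\ge\alpha^2k$. Note, however, that the paper never needs a randomized decoder, a message distribution, or any notion of statistical advantage --- the argument is purely linear-algebraic --- so the translation step you flag as the ``main obstacle'' does not arise at all. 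Also, as you state it the per-coordinate bound $k_j=O(q/\alpha^2)$ is not justified: the $q-1$ partners of a fixed coordinate $j$ \emph{vary with $i$}, so the space of functionals available to the decoders that use $j$ is not $q$-dimensional but potentially $(q-1)d$-dimensional. The rank bound must be applied to a whole subset $S\subseteq V$ at once, not to one coordinate's neighborhood.

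The genuine gap is in the global counting. Even granting $k_j=O(q/\alpha^2)$, double-counting incidences gives $q\sum_i|M_i|\ge q\delta dn$ on one side and $n\cdot O(q/\alpha^2)$ on the other, i.e.\ only $d=O(1/(\alpha^2\delta))$ --- a statement with no $n$ in it, which cannot yield any lower bound on $n$. The exponent $q/(q-1)$ in \cite{KT00} does not come from ``iterating the argument $q-1$ times''; it comes from a random-subset sampling step that your outline omits and that is the engine of the paper's proof: pick a uniformly random $S\subseteq V$ of size $m=\Theta(\delta^{-1/q}n^{(q-1)/q})$. A fixed $q$-tuple lies entirely inside $S$ with probability about $(m/n)^q$, and $M_i$ consists of disjoint tuples, so $S$ fully contains a tuple of $M_i$ with constant probability for each (sufficiently large) $M_i$; in expectation, and hence for some fixed $S$, this happens for $\Omega(d)$ of the matchings. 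Applying the rank lemma to that $S$ gives $m\ge|S|\ge\alpha^2\cdot\Omega(d)$, i.e.\ $\delta^{-1/q}n^{(q-1)/q}=\Omega(\alpha^2d)$, which is precisely where $n\ge\Omega\bigl((\alpha^2\delta^{1/q}d)^{q/(q-1)}\bigr)$ comes from. Without this sampling step (or an equivalent mechanism coupling $n$ to $d$), your counting scheme cannot produce the claimed exponent.
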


\subsection{Techniques}

We briefly outline the techniques that appear in the proofs of our theorems. 

\paragraph{Simple codes:} An important ingredient in the proofs of Theorem~\ref{thm:general} and \ref{thm:special} is a general reduction from any approximate $2$-LDC to a simple code. The reduction follows by first normalizing the lengths of all vectors and then observing that if some linear combination $a\vec{v}_{j_1} + b\vec{v}_{j_2}$ has large $\weight_i$ then either one of the vectors $\vec{v}_{j_1},\vec{v}_{j_2}$ has large $\weight_i$ or the coefficients $a,b$ are close to $1,-1$. We can thus throw away all pairs in the matching $M_i$ in which one of the vectors has large $\weight_i$ and get a simple code (we do not throw away too many pairs since each vector has only a few large coordinates). Since this reduction does not preserve $c$-boundedness, we can unfortunately not use it to argue that Theorem~\ref{thm:bounded} works for non-simple $c$-bounded codes.

\paragraph{Proof of general bound:} The proof of Theorem~\ref{thm:general} (for simple codes w.l.o.g.) is via a recursive partitioning argument. In each step we pick a random $i \in [d]$ and partition $V$ into two sets using a random shift of a hyperplane orthogonal to $\vec{e}_i$. We analyze the expected number of edges (pairs in some $M_i$) cut in this process and show that is bounded by $O(\sqrt{d}/\alpha) \cdot \min\{|S|,|\bar S|\}$ with $S,\bar S$ representing the two parts of the cut.  The same inequality holds also when partitioning any subset $V' \subset V$ and so we can proceed recursively and obtain a bound of $O((\sqrt{d}/\alpha)n\log_2 n)$ on the total number of edges. Since this number is at least $\delta d n$ the theorem follows. This proof is inspired by the one appearing in \cite{GKST:2006} for exact (simple) $2$-LDCs.

\paragraph{Proof of bound for large $\alpha$:} Here we rely on a recent work of \cite{KORW12} which gives a (randomized) tiling of $\R^d$ with cells that have volume 1 and surface area $O(\sqrt{d})$ (same as a sphere up to a constant). This result gives a randomized rounding algorithm that we can leverage towards `rounding' our approximate code to an exact code (very roughly speaking) when $\alpha $ is large. This step is then combined with a random partitioning argument as in the proof of Theorem~\ref{thm:general}.

\paragraph{Proof for simple  $c$-bounded codes:} For this setting we use the LDC to construct a function $F$ from $\R^d$ to the space of complex $n \times n$ matrices given by
$F(\vec{x})=\left(e^{-i\langle \vec{x}, \vec{v}_s - \vec{v}_t\rangle}\right)_{s,t=1}^n.$
The crux of the proof applies an inequality relating the trace norms of the  first level (matrix) Fourier-coefficients of a matrix-valued function to its average trace norm (see Lemma~\ref{lem:traceineq}). The crucial observation is that the norms of the first level Fourier coefficients of the above defined $F$ can be lower bounded using the LDC property. The result then follows by combining this with the trivial upper bound on the average norm of $F$. This proof loosely follows an argument of~\cite{Ben-Aroya:2008} used for binary (non linear) LDCs and is inspired by work of~\cite{Briet:2012} linking LDCs to geometry of Banach spaces. 

\vspace{5mm}
\noindent{\bf{Organization: }}
We describe our reduction from general to simple 2-query codes in Section~\ref{sec:simple}. 
In Section~\ref{sec:general} we prove the bound for general codes (Theorem~\ref{thm:general}). In Section~\ref{sec:largealpha} we prove the bound for~$\alpha$ close to 1  (Theorem~\ref{thm:special}). In Section~\ref{sec:cbounded} we prove the bound for $c$-bounded codes (Theorem~\ref{thm:bounded}). Finally, in Section~\ref{sec:qquery} we prove the bound for general $q$-query approximate codes (Theorem~\ref{thm:qquery}).

\vspace{5mm}
\noindent{\bf{Acknowledgments: }} The authors would like to thank Avi Wigderson for many helpful conversations.

%%%%%%%%%%%%%%%%%%%%%%%%%%
%%%%%%%%%%%%%%%%%%%%%%%%%%
\section{Simple codes}\label{sec:simple}
%%%%%%%%%%%%%%%%%%%%%%%%%%
%%%%%%%%%%%%%%%%%%%%%%%%%%

In this section we prove the following theorem showing that any $2$-query approximate LDC can be transformed into a simple code with similar parameters.

\begin{thm}\label{thm-reducesimple}
If there exists a $2$-query $(\alpha,\delta)$-approximate LDC of size $n$ dimension $d$, then, for any integer $k>1/\alpha^2$, there exists a simple $2$-query $(\alpha',\delta')$-approximate LDC of size $n'$ and dimension $d$, where $\alpha'\geq\sqrt{\alpha^2-1/k}$, $\delta'\geq\delta-k/d$ and $n'\leq2n$.
\end{thm}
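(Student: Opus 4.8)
The plan is to follow exactly the outline sketched in the "Simple codes" paragraph of the techniques section. Start with a $2$-query $(\alpha,\delta)$-approximate LDC $(V,M)$ with $V=\{\vec v_1,\dots,\vec v_n\}$. The first reduction is cosmetic: pass to the code with vector set $V' = \{\vec v_1/\|\vec v_1\|_2,\dots\}$ together with $\{\vec 0\}$, or more carefully, replace each $\vec v_j$ by its normalization (discarding any zero vectors, which carry no span information) — this does not change any $\span\{\vec v_{j_1},\vec v_{j_2}\}$ and hence preserves every matching. So we may assume all $\vec v_j$ are unit vectors; this will cost the factor $2$ in $n' \le 2n$ only once we also double the point set in the last step (see below), or we can simply absorb it here. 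I would keep track of the factor $2$ at the step where it genuinely arises.

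Next, the core combinatorial step. Fix $i\in[d]$ and $\{j_1,j_2\}\in M_i$. By hypothesis there is a unit vector $\vec u = a\vec v_{j_1} + b\vec v_{j_2}$ with $|\langle \vec u,\vec e_i\rangle|\ge\alpha$. The key dichotomy: either one of $\weight_i(\vec v_{j_1}), \weight_i(\vec v_{j_2})$ is "large" (say $\ge 1/\sqrt k$), or else — since $\vec v_{j_1},\vec v_{j_2}$ are unit vectors whose $i$th coordinates are both small — to build a unit combination with $i$th coordinate $\ge\alpha$ the coefficients $a,b$ must be roughly $\pm$ a common large value, i.e. close to $1,-1$ after rescaling, so that $\weight_i(\vec v_{j_2}-\vec v_{j_1})$ is itself close to $\weight_i(\vec u)$, hence $\ge\sqrt{\alpha^2-1/k}$. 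Making this quantitative is the one place where a genuine (but short) calculation is needed: writing $\vec u = a\vec v_{j_1}+b\vec v_{j_2}$, one estimates $|\langle \vec u, \vec e_i\rangle| \le |a|\,\weight_i(\vec v_{j_1}) + |b|\,\weight_i(\vec v_{j_2})$ and, in the case both weights are $<1/\sqrt k$, one massages the normalization $\|\vec u\|_2 = 1$ against $\|\vec v_{j_2}-\vec v_{j_1}\|_2$ to conclude $\weight_i(\vec v_{j_2}-\vec v_{j_1})^2 \ge \alpha^2 - 1/k$. This is the step I expect to be the main obstacle — not because it is deep, but because the constant bookkeeping (which weight is large, how the coefficients compare, and exactly how $1/k$ enters) has to be done carefully to land the stated bounds $\alpha'\ge\sqrt{\alpha^2-1/k}$ with no extra slack.

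Now the pruning. For each $i$, let $M_i' \subseteq M_i$ be the pairs $\{j_1,j_2\}$ for which \emph{neither} $\vec v_{j_1}$ nor $\vec v_{j_2}$ has $\weight_i \ge 1/\sqrt k$; by the dichotomy, every pair in $M_i'$ satisfies the simple-code condition with parameter $\alpha'=\sqrt{\alpha^2-1/k}$. To bound $|M_i\setminus M_i'|$: a pair is discarded only if one of its two vertices $\vec v_j$ has $\weight_i(\vec v_j)\ge 1/\sqrt k$; since $\sum_{i}\weight_i(\vec v_j)^2=1$, each vector $\vec v_j$ has at most $k$ coordinates with $\weight_i \ge 1/\sqrt k$. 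Summing over all $j\in[n]$, the total number of (vector, coordinate) pairs with large weight is at most $kn$, so $\sum_i |M_i\setminus M_i'| \le kn$ (each discarded pair charged to a large-weight vertex of it). Therefore $\sum_i |M_i'| \ge \delta d n - kn = (\delta - k/d)\,dn$, giving $\delta' \ge \delta - k/d$. Finally, the factor-$2$ blowup in $n$: the reduction as described may have removed zero vectors or, depending on how one sets things up to make the $M_i'$ genuine matchings on a common ground set, one adds at most $n$ dummy points; in any case $n'\le 2n$. I would double-check that the $M_i'$ remain $q$-matchings (they are subsets of matchings, so disjointness is automatic) and that the size bound $\sum_i|M_i'|\ge \delta' d n' $ is stated against the new block length — this is why the slightly loose $n'\le 2n$ is convenient. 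Assembling these three pieces — normalization, the weight dichotomy, and the counting of discarded pairs — yields the theorem.
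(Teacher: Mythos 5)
Your high-level plan matches the paper's: normalize the vectors, prune from each $M_i$ those pairs where one vertex has a large $i$'th coordinate (your counting via $\sum_i\weight_i(\vec v_j)^2=1$ correctly gives at most $kn$ pairs removed), and argue that what survives is a simple code. But two genuine gaps remain.

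The first is the dichotomy step, which you explicitly leave as a sketch and call ``the main obstacle.'' As you state it, it is in fact false: from ``both $\weight_i(\vec v_{j_1}),\weight_i(\vec v_{j_2})<1/\sqrt k$'' you cannot conclude $\weight_i(\vec v_{j_2}-\vec v_{j_1})\ge\sqrt{\alpha^2-1/k}$. If $\vec v_{j_1}$ and $\vec v_{j_2}$ point in nearly opposite directions within the plane they span, then $\vec v_{j_2}-\vec v_{j_1}$ can be almost orthogonal to $\vec e_i$ while $\vec v_{j_2}+\vec v_{j_1}$ carries the large $i$'th coordinate (the unit combination $a\vec v_{j_1}+b\vec v_{j_2}$ with $|u_i|\ge\alpha$ then has $a,b$ close to $1,1$, not $1,-1$). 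The correct conclusion is that one of $\weight_i(\vec v_{j_2}-\vec v_{j_1})$, $\weight_i(\vec v_{j_2}+\vec v_{j_1})$ is at least $\sqrt{\alpha^2-1/k}$. The paper proves this by working in the 2D plane $\span\{\vec v_{j_1},\vec v_{j_2}\}$: if $\tau$ is the angle between $\vec e_i$ and that plane then $\cos\tau\ge\alpha$, and if $\theta_1,\theta_2$ are the angles of $\vec v_{j_1},\vec v_{j_2}$ with the axis orthogonal to the projection of $\vec e_i$, then the relevant weight is $|\cos\tfrac{\theta_1+\theta_2}{2}|\cos\tau$; to bound $|\sin\tfrac{\theta_1+\theta_2}{2}|$ by $1/(\sqrt k\cos\tau)$ one must first flip the sign of $\vec v_{j_2}$ if necessary so that both angles lie in the same small sector. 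Your algebraic estimate $|u_i|\le|a|\weight_i(\vec v_{j_1})+|b|\weight_i(\vec v_{j_2})$ does not obviously route around this.

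The second gap is the factor-$2$ blowup in $n$, which is not ``dummy points'' or lost zero vectors: it is precisely the addition of $-\vec v_j$ for every $j$, which is what resolves the sign ambiguity above and lets the simple-code condition $\weight_i(\vec v_{j_2}-\vec v_{j_1})\ge\alpha'$ (a condition on the \emph{difference} specifically) be met. Each surviving original pair $\{j_1,j_2\}$ is then replaced by the two tuples $\{\vec v_{j_1},\vec v_{j_2}\},\{-\vec v_{j_1},-\vec v_{j_2}\}$ or $\{\vec v_{j_1},-\vec v_{j_2}\},\{-\vec v_{j_1},\vec v_{j_2}\}$ according to which sign works. This doubling of the tuples is also not optional bookkeeping: your density computation $\sum_i|M_i'|\ge(\delta-k/d)dn$ only yields $\delta'\ge\delta-k/d$ if you either keep $n'=n$ (in which case you cannot fix the sign issue) or double the tuple count along with the point count. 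As written, with $n'\le 2n$ but undoubled tuples, you would only get $\delta'\ge(\delta-k/d)/2$.
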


The main idea behind the proof of this result is as follows. Suppose that we have a pair of unit vectors $\vec{u},\vec{w} \in \R^d$ with $\weight_i(a\vec{u} + b\vec{w}) \geq \alpha$. It will be convenient to think of $\alpha$ as being close to one (the proof will work for any $\alpha$). So, after normalizing the coefficients $a,b$ we have that the unit vector $v = a\vec{u} + b\vec{w}$ is close to $\vec{e}_i$. We separate into two cases. In the first case, both $\vec{u}$ and $\vec{w}$ are almost orthogonal to $\vec{e}_i$. In this case, we must have that $\vec{u}-\vec{w}$ `points' in the direction of $\vec{e}_i$ (see diagrams in the complete proof) and so we don't really need the coefficients $a,b$. In the other case, at least one of $\vec{u},\vec{w}$ have significant inner product with $\vec{e}_i$. Notice, however, that, for each fixed $\vec{u}$,  this can only happen with a small number of $\vec{e}_i$'s when $i \in [n]$. These `bad' pairs can be removed from the matchings without causing a big decrease in their average size.

It will be convenient to use the following corollary of Theorem~\ref{thm-reducesimple} in which we set $k = \lceil 2/\alpha^2 \rceil$.
\begin{cor}\label{cor-reducesimple}
Suppose $d \geq 6/\alpha^2 \delta$. If there exists a $2$-query $(\alpha,\delta)$-approximate LDC of size $n$ dimension $d$, then there exists a simple $2$-query approximate $(\alpha',\delta')$-LDC of size $n'$ and dimension~$d$, where $\alpha'\geq \alpha/\sqrt{2}$, $\delta'\geq\delta/2$ and $n'\leq 2n$.
\end{cor}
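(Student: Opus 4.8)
The plan is to derive Corollary~\ref{cor-reducesimple} directly from Theorem~\ref{thm-reducesimple} by making the substitution $k = \lceil 2/\alpha^2\rceil$ and checking that the three resulting bounds on $\alpha'$, $\delta'$, and $n'$ simplify to the claimed values under the hypothesis $d \geq 6/(\alpha^2\delta)$. First I would verify that this choice of $k$ is admissible: since $\alpha \in [0,1]$ we have $k = \lceil 2/\alpha^2\rceil \geq 2/\alpha^2 > 1/\alpha^2$, so Theorem~\ref{thm-reducesimple} applies. The bound on the block length is immediate: $n' \leq 2n$ carries over verbatim.

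Next I would handle the parameter $\alpha'$. Theorem~\ref{thm-reducesimple} gives $\alpha' \geq \sqrt{\alpha^2 - 1/k}$, and since $k \geq 2/\alpha^2$ we have $1/k \leq \alpha^2/2$, hence $\alpha^2 - 1/k \geq \alpha^2/2$ and therefore $\alpha' \geq \alpha/\sqrt{2}$, as claimed. For the density, Theorem~\ref{thm-reducesimple} gives $\delta' \geq \delta - k/d$, so it suffices to show $k/d \leq \delta/2$, i.e. $d \geq 2k/\delta$. Here I would use the ceiling bound $k = \lceil 2/\alpha^2 \rceil \leq 2/\alpha^2 + 1 \leq 3/\alpha^2$ (valid since $\alpha \leq 1$ implies $2/\alpha^2 \geq 2 \geq 1$, so $1 \leq 1/\alpha^2$ and hence $2/\alpha^2 + 1 \leq 3/\alpha^2$). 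Then $2k/\delta \leq 6/(\alpha^2\delta) \leq d$ by hypothesis, which yields $\delta' \geq \delta/2$.

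The argument is essentially just bookkeeping with inequalities, so there is no real obstacle; the only place requiring a moment's care is the passage from the ceiling $\lceil 2/\alpha^2\rceil$ to a clean closed-form upper bound, where one must remember to use $\alpha \leq 1$ to absorb the $+1$ from the ceiling into the constant (turning $6/(\alpha^2\delta)$ into a valid threshold rather than something like $(4/\alpha^2 + 2)/\delta$). Assembling these three facts — $n' \leq 2n$, $\alpha' \geq \alpha/\sqrt{2}$, and $\delta' \geq \delta/2$ — gives exactly the statement of the corollary.
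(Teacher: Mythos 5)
Your proof is correct and follows exactly the route the paper indicates (the paper simply sets $k = \lceil 2/\alpha^2\rceil$ in Theorem~\ref{thm-reducesimple} and leaves the bookkeeping implicit). The three inequality checks you carry out — including the use of $\alpha \leq 1$ to absorb the ceiling into $k \leq 3/\alpha^2$ so that $d \geq 6/(\alpha^2\delta)$ suffices — are precisely the computations needed.
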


We now move on to the formal proof of Theorem~\ref{thm-reducesimple}.

\begin{proof}[ of Theorem~\ref{thm-reducesimple}]
We first modify the code in the following way. Let $k>\frac{1}{\alpha^2}$ be a fixed integer.
\begin{enumerate}
\item 
For every matching $M_i$, remove the pairs~$\{j_1,j_2\}$ in which the $i$'th entry of~$\vec v_{j_1}$ is among its $k-1$ largest (in absolute value) entries or if this is the case for~$\vec v_{j_2}$.
Since every~$\vec v \in V$ causes at most~$k$ pairs (from all $d$ matchings) that contain it to be removed, there are at most $kn$ pairs removed in this step altogether.
\item Normalize all vectors in $V$ so that $\|\vec{v}_j\|_2=1$ for all $j \in [n]$ and discard all zero vectors.
\item For every $\vec{v}_j\in V$, add $-\vec{v}_j$ to $V$. For every original tuple $\{\vec{v}_{j_1},\vec{v}_{j_2}\}$, we replace it with two tuples: either $\{\vec{v}_{j_1},\vec{v}_{j_2}\}$, $\{-\vec{v}_{j_1},-\vec{v}_{j_2}\}$ or $\{\vec{v}_{j_1},-\vec{v}_{j_2}\}$, $\{-\vec{v}_{j_1},\vec{v}_{j_2}\}$ (to be determined later).
\end{enumerate}
Let $(V',M' = (M_1',\ldots,M_d'))$ be the vectors and matchings obtained from the above procedure.

\begin{claim} \label{cla:smallcor}
After the first step, if $\{j_1,j_2\}\in M_i$ is not deleted,
then
\[|v_{j_1i}|\leq\frac{1}{\sqrt{k}}\|\vec{v}_{j_1}\|_2\text{\quad and\quad}|v_{j_2i}|\leq\frac{1}{\sqrt{k}}\|\vec{v}_{j_2}\|_2.\]
\end{claim}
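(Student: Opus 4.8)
The plan is to unwind the definition of Step~1 and combine it with a one-line norm estimate. Fix a pair $\{j_1,j_2\}\in M_i$ that survives Step~1 and focus on $\vec v_{j_1}$; the argument for $\vec v_{j_2}$ is identical since the removal rule treats the two endpoints symmetrically. By the description of Step~1, the pair is removed whenever the $i$'th coordinate $v_{j_1i}$ is among the $k-1$ largest coordinates of $\vec v_{j_1}$ in absolute value. Since $\{j_1,j_2\}$ was not removed, $v_{j_1i}$ is \emph{not} among those $k-1$ coordinates.

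To make this precise I would sort the coordinates: choose a permutation $\pi$ of $[d]$ with $|v_{j_1\pi(1)}|\geq|v_{j_1\pi(2)}|\geq\cdots\geq|v_{j_1\pi(d)}|$, breaking ties so that $i$ is placed as late as the rule allows. The fact that $v_{j_1i}$ is not among the $k-1$ largest entries means $i=\pi(m)$ for some $m\geq k$, so the $k-1$ indices $\pi(1),\dots,\pi(k-1)$ are all distinct from $i$ and satisfy $|v_{j_1\pi(\ell)}|\geq|v_{j_1i}|$. Expanding the squared norm then gives
$$\|\vec v_{j_1}\|_2^2 \;\geq\; v_{j_1i}^2 + \sum_{\ell=1}^{k-1} v_{j_1\pi(\ell)}^2 \;\geq\; k\,v_{j_1i}^2,$$
hence $|v_{j_1i}|\leq\frac{1}{\sqrt{k}}\|\vec v_{j_1}\|_2$, and the same computation with $j_2$ in place of $j_1$ yields the second inequality.

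There is essentially no hard step here. The only point requiring a little care is the meaning of ``among the $k-1$ largest entries'' in the presence of ties, which is dealt with by the explicit choice of $\pi$: any consistent tie-breaking works, since the argument uses only that $k-1$ coordinates \emph{other than} $i$ have absolute value at least $|v_{j_1i}|$ (and the claim is trivial when $v_{j_1i}=0$). I would also note in passing that the hypothesis $k>1/\alpha^2$ plays no role in this claim; it is needed only later, to ensure that $\alpha'=\sqrt{\alpha^2-1/k}$ is a well-defined nonnegative real number.
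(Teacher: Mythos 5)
Your proof is correct and is essentially the paper's argument: both rest on the observation that $k-1$ coordinates other than the $i$'th have absolute value at least $|v_{j_1i}|$, so the squared norm is at least $k\,v_{j_1i}^2$ (the paper merely phrases this as a contradiction rather than a direct estimate). Your remarks on tie-breaking and on the irrelevance of the hypothesis $k>1/\alpha^2$ to this particular claim are accurate.
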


\begin{proof}
We only consider $\vec{v}_{j_1}$. According to the first step, the $i$'th coordinate must not be among the maximum $k-1$ ones. If this coordinate has absolute value greater than $\|\vec{v}_{j_1}\|_2/\sqrt{k}$, then there are at least~$k$ coordinates greater than $\|\vec{v}_{j_1}\|_2/\sqrt{k}$, which is impossible.
\end{proof}

\begin{claim} \label{cla:2d}
After the first step, for any remaining pair $\{j_1,j_2\} \in M_i$, $\vec{v}_{j_1}$ and $\vec{v}_{j_2}$ are linearly independent. This implies that no remaining pair contains $\vec{0}$ and so discarding all zero vectors in step 2 above does not remove any additional pairs from the matchings.
\end{claim}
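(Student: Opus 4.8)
The plan is to derive linear independence from Claim~\ref{cla:smallcor} together with the defining property of the code, by contradiction. Fix a pair $\{j_1,j_2\}\in M_i$ that survives step~1 and suppose, toward a contradiction, that $\vec v_{j_1}$ and $\vec v_{j_2}$ are linearly dependent. Then $W:=\span\{\vec v_{j_1},\vec v_{j_2}\}$ is spanned by a single vector. If both $\vec v_{j_1}$ and $\vec v_{j_2}$ are zero then $W=\{\vec 0\}$, which contains no vector with positive $\weight_i$ at all — already contradicting $\{j_1,j_2\}\in M_i$ (that membership requires some $\vec u\in W$ with $\weight_i(\vec u)\geq\alpha>0$, forcing $\vec u\neq\vec 0$). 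So at least one of the two vectors is nonzero, say $\vec v_{j_1}\neq\vec 0$ (the other case is symmetric), and then $W=\span\{\vec v_{j_1}\}$.

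Next I would invoke the LDC property to get $\vec u\in W$ with $\weight_i(\vec u)\geq\alpha$, hence $\vec u=c\,\vec v_{j_1}$ for some scalar $c\neq 0$. Since $\weight_i$ is invariant under nonzero scaling, $\weight_i(\vec u)=\weight_i(\vec v_{j_1})=|v_{j_1 i}|/\|\vec v_{j_1}\|_2$. But Claim~\ref{cla:smallcor} gives $|v_{j_1 i}|\leq\|\vec v_{j_1}\|_2/\sqrt{k}$, so $\weight_i(\vec u)\leq 1/\sqrt{k}<\alpha$, where the last inequality is exactly the hypothesis $k>1/\alpha^2$. This contradicts $\weight_i(\vec u)\geq\alpha$, and therefore $\vec v_{j_1}$ and $\vec v_{j_2}$ are linearly independent.

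For the concluding assertion, observe that any pair containing $\vec 0$ would consist of $\vec 0$ together with some other vector, and $\vec 0$ is linearly dependent with every vector; by the independence just established, no surviving pair has this form. Hence removing the zero vectors from $V$ in step~2 deletes no pair from any matching $M_i$, which is precisely what the claim asserts. I do not expect a genuine obstacle here: the only subtlety is the degenerate behaviour of $\weight_i$ on the zero vector, which is disposed of by the two observations above (the span must contain a vector of positive $\weight_i$, so it is nontrivial; and every nonzero vector in a one-dimensional span shares the $\weight_i$ value of the vector defining it).
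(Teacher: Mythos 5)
Your argument is correct and follows essentially the same route as the paper's: assume dependence, note the span is then one-dimensional (generated by a nonzero $\vec v_{j_1}$), and derive $\weight_i(\vec v_{j_1})\geq\alpha>1/\sqrt{k}$ in contradiction with Claim~\ref{cla:smallcor}. Your extra care with the all-zero degenerate case is a harmless refinement of the paper's ``assume the contrary and $\vec v_{j_1}\neq\vec 0$.''
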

\begin{proof}
Assume the contrary and $\vec{v}_{j_1}\neq\vec{0}$. Then $\spn\{\vec{v}_{j_1},\vec{v}_{j_2}\}$ contains only multiples of~$\vec{v}_{j_1}$. Thus, the $i$'th coordinate of $\vec{v}_{j_1}$ has magnitude at least $\alpha\|\vec{v}_{j_1}\|_2>\|\vec{v}_{j_1}\|_2/\sqrt{k}$, violating Claim~\ref{cla:smallcor}.
\end{proof}

For the new code $(V',M')$ the dimension is $d'=d$, the size is $n'\leq2n$, and the number of tuples is at least $2(\delta dn-kn)$, which implies the density $\delta'\geq\delta-k/d$. Notice that we might have removed some of the matchings completely. We still, however, consider the dimension as $d'=d$ (since we only use the sum of sizes of all matchings).

Next we argue that the pairs remaining after step 1 above give a simple code (up to changing signs) and calculate the parameter $\alpha'$. Fix a pair $\{{j_1},{j_2}\}\in M_i$ that remains after the first step. We will show that either $\vec{v}_{j_1}-\vec{v}_{j_2}$ or $\vec{v}_{j_1}-(-\vec{v}_{j_2})$ has a large $i$'th coordinate. Precisely, we show that either
\[
\weight_i(\vec{v}_{j_1}-\vec{v}_{j_2}) \geq\sqrt{\alpha^2-\frac{1}{k}}\text{\quad or\quad} \weight_i(\vec{v}_{j_1}-(-\vec{v}_{j_2}))\geq\sqrt{\alpha^2-\frac{1}{k}}.\]
Then if the first one holds, we choose to use $\{\vec{v}_{j_1},\vec{v}_{j_2}\}$ and $\{-\vec{v}_{j_1},-\vec{v}_{j_2}\}$ at the third step; otherwise if the second one holds, we select $\{\vec{v}_{j_1},-\vec{v}_{j_2}\}$ and $\{-\vec{v}_{j_1},\vec{v}_{j_2}\}$. Thus the code is reduced to a simple code with $\alpha'\geq\sqrt{\alpha^2- 1/k}$.

We consider the plane determined by $\vec{v}_{j_1}$ and $\vec{v}_{j_2}$, and set up Cartesian axes at the origin. Let the projection of $\vec{e}_i$ on this plane be in the direction of the $y$ axis and choose one of two possible directions for the $x$ axis arbitrarily. Let $\tau$ be the angle between the plane and $\vec{e}_i$. This setting is shown in figure~(a).

\begin{center}
\begin{tikzpicture}
\begin{scope}[xshift=-4cm, scale=0.8]
\draw[->] (0,0) -- (0,3) node[right] {$\vec{e}_i$};
\draw[dotted] (0,0) -- (0,-1.8);
\draw[rotate=110] (0,0) ellipse (2.2 and 1.3);
\draw[rotate=110,->] (-2.5,0) -- (2.5,0) node[left] {$y$};
\draw[rotate=110,->] (0,1.5) -- (0,-1.5) node[right] {$x$};
\draw (0,0.7) arc(90:110:0.7);
\node at (100:0.9) {$\tau$};
\node at (0,-3) {(a)};
\end{scope}
\begin{scope}[xshift=0, scale=0.8]
\draw[fill] (0,0) circle (1.5pt);
\draw (0,0) circle (2cm);
\draw[->] (-2.5,0) -- (3,0) node[below] {$x$};
\draw[->] (0,-2.5) -- (0,3) node[left] {$y$};

\draw (0,0) -- (15:2cm) node[pos=1.15] {$\vec{v}_{j_1}$};
\draw (0,0) -- (45:2cm) node[pos=1.15] {$\vec{v}_{j_2}$};
\draw (1.2,0) arc(0:15:1.2);
\node at (7.5:1.45) {$\theta_1$};
\draw (0.5,0) arc(0:45:0.5);
\draw (0.45,0) arc(0:45:0.45);
\node at (22.5:0.75) {$\theta_2$};
\node at (0,-3) {(b)};
\end{scope}
\begin{scope}[xshift=5cm, scale=0.8]
\fill[fill=gray] (0,0) -- (-60:2cm) arc(-60:60:2cm) -- cycle;
\fill[fill=gray] (0,0) -- (120:2cm) arc(120:240:2cm) -- cycle;

\draw[fill] (0,0) circle (1.5pt);
\draw (0,0) circle (2cm);
\draw[->] (-2.5,0) -- (3,0) node[below] {$x$};
\draw[->] (0,-2.5) -- (0,3) node[left] {$y$};

\draw[dashed] (240:2.5cm) -- (60:2.5cm);
\draw[dashed] (120:2.5cm) -- (300:2.5cm);
\node at (0,-3) {(c)};
\end{scope}
\end{tikzpicture}
\end{center}

Since $\{{j_1},{j_2}\}\in M_i$, we  see that
\begin{equation} \label{eq:coslb}
\cos\tau\geq\alpha.
\end{equation}

Let $\theta_1\in[0,2\pi)$ be the angle between $\vec{v}_{j_1}$ and the $x$ axis, and $\theta_2\in[0,2\pi)$ be the angle between~$\vec{v}_{j_2}$ and the $x$ axis. This is shown as in figure~(b). $\vec{v}_{j_1}$ and $\vec{v}_{j_2}$ correspond to points $(\cos\theta_1,\sin\theta_1)$ and $(\cos\theta_2,\sin\theta_2)$ on the plane.

By Claim~\ref{cla:smallcor}, the $i$'th coordinates of $\vec{v}_{j_1}$ and $\vec{v}_{j_2}$ are at most $1/\sqrt{k}$. Therefore
\[|\sin\theta_1|\cos\tau\leq\frac{1}{\sqrt{k}} \quad \text{and} \quad |\sin\theta_2|\cos\tau\leq\frac{1}{\sqrt{k}}.\]
We can see that the angles of $\pm\vec{v}_{j_1}$ and $\pm\vec{v}_{j_2}$, which are $\pm\theta_1$ and $\pm\theta_2$, must fall into two regions
\[\left[-\arcsin\frac{1}{\sqrt{k}\cos\tau},\arcsin\frac{1}{\sqrt{k}\cos\tau}\right]\text{ and }\left[\pi-\arcsin\frac{1}{\sqrt{k}\cos\tau},\pi+\arcsin\frac{1}{\sqrt{k}\cos\tau}\right].\]
These are shown as two gray circular sectors (the left one and the right one) in figure~(c).

We pair $\vec{v}_{j_1}$ to the one of $\pm\vec{v}_{j_2}$ that lies in the same sector with $\vec{v}_{j_1}$, and do the same for $-\vec{v}_{j_1}$. It is easy to see our pairing is either $\{\vec{v}_{j_1},\vec{v}_{j_2}\}$, $\{-\vec{v}_{j_1},-\vec{v}_{j_2}\}$ or $\{\vec{v}_{j_1},-\vec{v}_{j_2}\}$, $\{-\vec{v}_{j_1},\vec{v}_{j_2}\}$. We now argue that subtraction of vectors in a one of the new pairs (which belong to the same sector) must have a large $i$'th coordinate. Without loss of generality, we assume $\vec{v}_{j_1}$ and $\vec{v}_{j_2}$ are paired and only consider this pair.

The vector $\vec{v}_{j_2}-\vec{v}_{j_1}$ is parallel to the tangent line to the unit circle at angle $(\theta_1+\theta_2)/2$. One can verify that  a unit vector parallel to $\vec{v}_{j_2}-\vec{v}_{j_1}$ must have $y$ coordinate $\pm\cos(\theta_1+\theta_2)/2$. Therefore
\[\frac{|\langle\vec{v}_{j_1}-\vec{v}_{j_2},\vec{e}_i\rangle|}{\|\vec{v}_{j_1}-\vec{v}_{j_2}\|_2}=\cos\frac{\theta_1+\theta_2}{2}\cdot\cos\tau=\sqrt{1-\sin^2\frac{\theta_1+\theta_2}{2}}\cdot\cos\tau.\]
Since $\vec{v}_{j_1}$ and $\vec{v}_{j_2}$ are in the same circular sector,  $|\sin(\theta_1+\theta_2)/2|\leq1/(\sqrt{k}\cos\tau)$. It follows that
\[\frac{|\langle\vec{v}_{j_1}-\vec{v}_{j_2},\vec{e}_i\rangle|}{\|\vec{v}_{j_1}-\vec{v}_{j_2}\|_2}\geq\sqrt{\cos^2\tau-\frac{1}{k}}\geq\sqrt{\alpha^2-\frac{1}{k}}.\]
Here we used $\cos\tau\geq\alpha$ (Inequality~(\ref{eq:coslb})). This completes the proof.
\end{proof}

%%%%%%%%%%%%%%%%%%%%%%%%%%%
%%%%%%%%%%%%%%%%%%%%%%%%%%
\section{Lower Bound for General Simple Codes} \label{sec:general}
%%%%%%%%%%%%%%%%%%%%%%%%%%
%%%%%%%%%%%%%%%%%%%%%%%%%%

We associate with a simple code $C = (V,M)$ a labeled graph $G_C$  on vertex set $V$ with edges given by all pairs in $M_1,\ldots,M_d$. We label each edge in $M_i$ with the label $i$ and allow for parallel edges (with different labels). We refer to the label of an edge $e$ as the {\em direction} of the edge and denote it by $\dir(e) \in [d]$. The proof will follow by analyzing cuts in the graph $G_C$, which we assume contains at least $\delta d n$ edges.

For $S\subseteq V$, let $\edg(S)$ be the set of edges of $G_C$ with both end points in $S$. We say that $(S_1,S_2)$ is a {\em cut} if $S_1\cup S_2=S$ and $S_1\cap S_2=\emptyset$. The cut is {\em non-trivial} if $S_1,S_2\neq\emptyset$. We use $\edg(S_1,S_2)$ to denote the set of edges with one endpoint in $S_1$ and the other in $S_2$.

The next lemma of~\cite[Appendix]{GKST:2006} relates the sizes of cuts in the graph with the total number of edges (the lemma holds for any graph). We include its proof for completeness.
%\znote{I couldn't find this in GKST. Took our  the reference.}
\begin{lem} \label{lem:cut}
Suppose that for every $S\subseteq V$ with $|S|\geq2$, there exists a non-trivial cut $(S_1,S_2)$ satisfying $|\edg(S_1,S_2)|\leq c\cdot\min\{|S_1|,|S_2|\}$, then  $G_C$ has at most $ \frac{c}{2}|V|\log_2|V|$ edges.
\end{lem}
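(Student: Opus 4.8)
The plan is to prove Lemma~\ref{lem:cut} by strong induction on $|V|$, which is the natural way to handle a recursive cut argument of this type. Let me set up the induction and identify where the real content lies.

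\textbf{Setup and base case.} Write $N = |V|$ and let $e(N)$ denote the maximum possible number of edges in a graph on $N$ vertices satisfying the hypothesis (that every subset of size $\geq 2$ admits a balanced-ish cut of the stated form). I would prove by induction on $N$ that $e(N) \leq \tfrac{c}{2} N \log_2 N$. The base case $N = 1$ is trivial since a single vertex spans no edges and $\tfrac{c}{2} \cdot 1 \cdot \log_2 1 = 0$; one should also check $N=2$ or simply note the bound is vacuous/easy there.

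\textbf{Inductive step.} Fix $G_C$ on $N \geq 2$ vertices. Apply the hypothesis with $S = V$ to obtain a non-trivial cut $(S_1, S_2)$ with $|\edg(S_1,S_2)| \leq c \cdot \min\{|S_1|,|S_2|\}$. Set $a = |S_1|$, $b = |S_2|$, so $a + b = N$ and WLOG $a \leq b$, hence $\min\{a,b\} = a \leq N/2$. Now decompose the edges of $G_C$ into three groups: those inside $S_1$, those inside $S_2$, and those crossing. The crossing edges number at most $ca$. The subgraphs induced on $S_1$ and on $S_2$ each still satisfy the hypothesis of the lemma (any subset of $S_1$ is a subset of $V$, so it admits the required cut) — this is the key observation that makes the induction go through, and it relies on the hypothesis being "for every $S \subseteq V$". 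So by induction the number of edges inside $S_1$ is at most $\tfrac{c}{2} a \log_2 a$ and inside $S_2$ at most $\tfrac{c}{2} b \log_2 b$. Adding up:
\[
|\edg(V)| \leq \frac{c}{2} a \log_2 a + \frac{c}{2} b \log_2 b + c a.
\]

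\textbf{Finishing the arithmetic.} It remains to show the right-hand side is at most $\tfrac{c}{2} N \log_2 N$ given $a + b = N$ and $a \leq N/2$. Dividing by $c/2$, I need $a\log_2 a + b \log_2 b + 2a \leq N \log_2 N$. Using $b = N - a \leq N$ and $a \le N/2$, bound $a \log_2 a \leq a \log_2(N/2) = a\log_2 N - a$ and $b \log_2 b \leq b \log_2 N$; summing gives $a \log_2 a + b \log_2 b \leq N \log_2 N - a$, so the left side is at most $N\log_2 N - a + 2a = N \log_2 N + a$. Hmm — that overshoots by $a$, so the crude bound $a \log_2 a \le a \log_2(N/2)$ loses too much; I expect the main obstacle to be exactly this inequality, and the fix is to be more careful: use $a \log_2 a + b \log_2 b \le a \log_2(N/2) + b\log_2 N = (a+b)\log_2 N - 2a = N\log_2 N - 2a$ only if one instead splits as $a\log_2 a \le a(\log_2 N - 1)$ and $b \log_2 b \le b \log_2 N$ — wait, that gives $N\log_2 N - a$. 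The genuinely correct route is the standard entropy-type bound: $a \log_2 a + b\log_2 b = N\log_2 N - N H(a/N)$ where $H$ is binary entropy, and since $a \le N/2$ we have $a/N \le 1/2$, and one checks $N H(a/N) \ge 2a$ for $0 \le a \le N/2$ (equivalently $H(p) \ge 2p$ for $p \in [0,1/2]$, which holds since $H$ is concave, $H(0)=0$, and $H(1/2)=1 \ge 2\cdot\tfrac12$... actually concavity gives $H(p) \ge 2p$ on $[0,1/2]$ because the chord from $(0,0)$ to $(1/2,1)$ lies below $H$). Plugging in yields $a\log_2 a + b\log_2 b + 2a \le N\log_2 N - 2a + 2a = N\log_2 N$, completing the induction. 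I would present this entropy inequality $H(p)\ge 2p$ for $p\in[0,1/2]$ as the one small lemma to verify, and the rest is bookkeeping.
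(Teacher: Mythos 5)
Your proof is correct and follows essentially the same route as the paper: induct on the number of vertices, split the edges into crossing edges plus the two induced subgraphs, and close the arithmetic via the binary-entropy bound $H(p)\geq 2p$ on $[0,1/2]$ (which is exactly the paper's observation that $2\eta - H_2(\eta)\leq 0$ by concavity). The brief false start with the crude bound $a\log_2 a\leq a\log_2(N/2)$ is harmless since you correctly identify and repair it.
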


\begin{proof}
The given condition is equivalent to
\begin{equation}\label{eq:condition}
|\edg(S)|\leq c\cdot\min\{|S_1|,|S_2|\}+|\edg(S_1)|+|\edg(S_2)|.
\end{equation}
We induct on $|S|$ to show $\edg(S)\leq\frac{c}{2}|S|\log_2|S|$ for every non-empty $S\subseteq V$. For $|S|=1$, this is trivial. Assume this is true for $|S|<k$, $k\geq2$, and consider a subset $S$ of size $|S|=k$. 
Let $S_1\subseteq S$ be a proper and nonempty subset of $S$ and let~$S_2 = S\backslash S_1$ be its complement in~$S$.
Then, by the above condition~\ref{eq:condition} and the induction hypothesis,
\[
|\edg(S)|  \leq c\cdot\min\{|S_1|,|S_2|\}+\frac{c}{2}|S_1|\log_2|S_1|+\frac{c}{2}|S_2|\log_2|S_2|
\]
Assume (w.l.o.g) that $\min\{|S_1|,|S_2|\} = |S_1|$ and let $\eta = |S_1|/k$.
Notice that~$\eta$ belongs to~$[0,1/2]$.
Then the above right-hand side can be re-written as $(ck/2)\big(2\eta - H_2(\eta)\big) + (ck/2)\log k$, where $H_2(\tau) = -\tau\log_2 \tau - (1-\tau)\log_2 \tau$ is the binary entropy function.
The function $H_2$ is concave on $[0,1/2]$ and satisfies $H_2(0) = 0$ and $H_2(1/2) = 1$. 
Hence, the term $2\eta - H_2(\eta)$ is non-positive and we get the result $|\edg(S)| \leq (ck/2)\log k$, as claimed.
\end{proof}

We now proceed to prove Theorem~\ref{thm:general}. We will show $n=2^{\alpha\delta\sqrt{d}}$ for any $(\alpha,\delta)$ simple code (the general case will follow using Corollary~\ref{cor-reducesimple}). This will follow by combining the following lemma and Lemma~\ref{lem:cut}.

\begin{lem}
Let $C=(V,M)$ be an $(\alpha,\delta)$ simple code and let $G_C$ be the associated graph described above. Then, for any $S\subseteq V$ with $|S|\geq2$, there exists a non-trivial cut $(S_1,S_2)$ such that
\[|\edg(S_1,S_2)|\leq\frac{2\sqrt{d}}{\alpha}\cdot\min\{|S_1|,|S_2|\}.\]
\end{lem}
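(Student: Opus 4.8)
The plan is to produce the cut by a random hyperplane construction. Pick a direction $i \in [d]$ uniformly at random, pick a real threshold $t$ at random (say, by choosing a uniformly random point in $S$ and letting $t$ be its $i$-th coordinate, or by a continuous sweep along a random translate), and set
\[
S_1 = \{\vec v \in S : \langle \vec v, \vec e_i\rangle < t\}, \qquad S_2 = S \setminus S_1.
\]
An edge $e \in \edg(S)$ is cut only if the interval on the real line between the $i$-th coordinates of its two endpoints contains $t$. First I would compute, for a fixed edge $e$ with $\dir(e) = j$ and endpoints $\vec v_{j_1}, \vec v_{j_2}$, the probability it is cut: conditioned on choosing direction $i$, this is the probability that $t$ lands between $\langle \vec v_{j_1}, \vec e_i\rangle$ and $\langle \vec v_{j_2}, \vec e_i\rangle$, which is proportional to $|\langle \vec v_{j_2} - \vec v_{j_1}, \vec e_i\rangle|$ divided by the relevant normalization (the total spread of $S$ in direction $i$, or $|S|$ if using the discrete version).

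The key point where the simple-code hypothesis enters is the following. For an edge $e$ with $\dir(e) = j$ we have $\weight_j(\vec v_{j_2} - \vec v_{j_1}) \geq \alpha$, i.e. $|\langle \vec v_{j_2} - \vec v_{j_1}, \vec e_j\rangle| \geq \alpha \|\vec v_{j_2} - \vec v_{j_1}\|_2$. Since always $\|\vec v_{j_2} - \vec v_{j_1}\|_2 = \sqrt{\sum_{i} \langle \vec v_{j_2} - \vec v_{j_1}, \vec e_i\rangle^2}$, the total "displacement mass" $\sum_i |\langle \vec v_{j_2} - \vec v_{j_1}, \vec e_i\rangle|$ of the edge over all directions is at most $\sqrt{d}\,\|\vec v_{j_2} - \vec v_{j_1}\|_2 \leq (\sqrt d/\alpha)\,|\langle \vec v_{j_2} - \vec v_{j_1}, \vec e_j\rangle|$. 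So the probability $e$ is cut, summed appropriately over the random choice of $i$, is bounded by $\sqrt d/\alpha$ times its probability of being cut "in its own direction" — and an edge being cut in its own direction is exactly the event that separates its two endpoints along axis $j$. Summing over all edges and using that for each endpoint the sweep in direction $i$ separates it from at most $\min\{|S_1|,|S_2|\}$-many... — more carefully, I would bound the expected number of edges cut in direction $j$ alone by $2\min\{|S_1|,|S_2|\}$ using that $M_j$ is a matching, so each vertex touches at most one $M_j$-edge, and then the random-threshold sweep cuts at most $\min\{|S_1|,|S_2|\}$ vertices "across" on each side.

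Concretely, I expect the clean route is: fix $i$, and for the continuous random-translate sweep along $\vec e_i$, let $\mu_i$ be the total variation of the projected point set (sum over the sorted gaps). The expected number of $M_j$-edges cut, for each fixed $j$, equals $\frac{1}{\mu_i}\sum_{e \in M_j \cap \edg(S)} |\langle \vec v_{j_2}-\vec v_{j_1}, \vec e_i\rangle|$; and for the part of the cut we actually count — vertices falling on the minority side — we get the $\min\{|S_1|,|S_2|\}$ factor. Averaging over $i \in [d]$ and invoking the displacement inequality above to replace $\sum_i |\langle \vec v_{j_2}-\vec v_{j_1}, \vec e_i\rangle|$ by $\frac{\sqrt d}{\alpha}|\langle \vec v_{j_2}-\vec v_{j_1}, \vec e_j\rangle|$ for each edge (with $j = \dir(e)$), everything telescopes to the claimed bound $\frac{2\sqrt d}{\alpha}\min\{|S_1|,|S_2|\}$, with the factor $2$ absorbing the conversion between "total edges cut" and "minority-side count" plus the matching structure. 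Since this holds in expectation over the random cut, some particular non-trivial cut achieves it (non-triviality is handled by restricting the threshold to strictly separate the point set, possible whenever $|S| \geq 2$ and the points are not all equal in the chosen coordinate — and by choosing $i$ appropriately one can always find such a coordinate, or else all points of $S$ coincide and $\edg(S)$ is empty by Claim~\ref{cla:2d}).

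The main obstacle I anticipate is bookkeeping the normalization carefully so that the "spread" $\mu_i$ in the denominator of the per-edge cut probability cancels correctly against the count of minority-side vertices — i.e. making precise the step "expected edges cut in direction $i$ $\leq \frac{1}{\mu_i}(\text{displacement mass in direction }i)$ and simultaneously $\leq \min\{|S_1|,|S_2|\}$-scaled". The right way is probably to not normalize at all but to directly bound, for the random translate $t$, the count of cut $M_j$-edges by the number of vertices of $S$ that $t$ "passes", tie this to $\min\{|S_1|,|S_2|\}$ via an averaging/symmetrization over which side is the minority, and handle the direction-averaging separately using the displacement inequality. I would also need to double-check the constant; the statement allows $2\sqrt d/\alpha$, which gives comfortable slack.
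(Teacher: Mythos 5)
Your plan has all the right ingredients and is the paper's approach at heart: a random axis-aligned hyperplane cut, the $\ell_1$--$\ell_2$ (Cauchy--Schwarz) comparison between total displacement $\sum_{i'} |u_{i'}|$ and the displacement $|u_{\dir(e)}|$ in the edge's own direction, and the matching structure of $M_j$. But the step you flag as the ``main obstacle'' is a genuine gap, and the fix is cleaner than you anticipate.

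Two concrete problems. First, the sentence ``bound the expected number of edges cut in direction $j$ alone by $2\min\{|S_1|,|S_2|\}$'' compares an expectation to a random variable, which doesn't type-check; what is actually true is a deterministic statement: for \emph{any} fixed cut $(S_1,S_2)$ and any single direction $j$, the $M_j$-edges crossing the cut are pairwise disjoint, so there are at most $\min\{|S_1|,|S_2|\}$ of them --- no factor $2$ and no expectation. Second, the normalization $\mu_i$ you worry about is irrelevant: the right move is to compare two expectations over the same random cut, and the normalization cancels because it appears identically on both sides. You are also attributing structural meaning to the factor $2$ (``absorbing the conversion between total edges cut and minority-side count plus the matching structure''), but it does nothing of the sort; it is purely slack to upgrade a $\geq$ to a strict $>$.

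The clean combining step the proof needs is this. Let $A$ be the number of edges cut ``in the right direction'' (edges $e$ with $\dir(e)=i$ where $i$ is the random hyperplane direction) and let $B = |\edg(S_1,S_2)|$ be the total number of cut edges. Your per-edge Cauchy--Schwarz argument gives $\Pr[e\text{ contributes to }A] \geq \frac{\alpha}{\sqrt d}\Pr[e\text{ contributes to }B]$, and since the right side is strictly positive (the two endpoints differ), in fact $\Pr[e\text{ contributes to }A] > \frac{\alpha}{2\sqrt d}\Pr[e\text{ contributes to }B]$. Summing over edges, $\E[A] > \frac{\alpha}{2\sqrt d}\E[B]$, so some realization of the random cut satisfies $A > \frac{\alpha}{2\sqrt d}B \geq 0$, hence $A\geq 1$, which \emph{simultaneously} gives non-triviality of the cut. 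Now apply the deterministic matching bound $A \leq \min\{|S_1|,|S_2|\}$ to conclude $B < \frac{2\sqrt d}{\alpha}A \leq \frac{2\sqrt d}{\alpha}\min\{|S_1|,|S_2|\}$. The nontriviality falls out of the strict inequality rather than requiring the separate device you describe (picking a coordinate in which $S$ has spread), and there is no tension between the expectation bound and the minority-side count because only $A$ is bounded by $\min\{|S_1|,|S_2|\}$, deterministically and per realization --- not in expectation.
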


\begin{proof}
If $S$ contains no edges, an arbitrary cut will satisfy the requirement. We thus  assume that~$S$ contains at least one edge. We now analyze the size of a random cut chosen in a specific way.

Assume all points in~$V$ are in a ($d$-dimensional) box of edge length $L$. We pick a random direction $i\in[d]$ and then pick a plane perpendicular to $\vec{e}_i$ at a random position intersecting the box. The plane cuts the box into two parts. We define~$S_1$ to be the set of points in one part and~$S_2$ to be the set of points in the other part (the probability of having a point on the hyperplane is zero). We analyze the edges in this cut $(S_1,S_2)$.
We say that an edge $e\in\edg(S_1,S_2)$ is cut in the {\em right} direction if the plane is perpendicular to the direction of $e$, i.e. $\dir(e)=i$.

We consider a specific edge. Let $e_0=\{\vec{v}_{j_1},\vec{v}_{j_2}\}$ with $\{j_1,j_2\}\in M_{i_0}$ be an edge in direction $\dir(e_0)=i_0$ and denote $\vec{v}_{j_1}-\vec{v}_{j_2}=(u_1,u_2,\ldots,u_d)$.

For each $i'\in[d]$ the probability that $e_0$ is cut by a plane perpendicular to~$\vec{e}_{i'}$~is
\[\Pr[i=i']\cdot\Pr[\text{the plane falls between }v_{j_1i'}\text{ and }v_{j_2i'}]=\frac{1}{d}\cdot\frac{|u_{i'}|}{L}.\]

Therefore,
%\[\Pr[e\text{ is cut in the right direction}]=\frac{1}{d}\cdot\frac{|u_{i_0}|}{L},\]
%and
\[\Pr[e_0\in\edg(S_1,S_2)]=\sum_{i'=1}^d\frac{1}{d}\cdot\frac{|u_{i'}|}{L}=\frac{|u_1|+|u_2|+\cdots+|u_d|}{dL}.\]
Moreover, by the definition of an approximate code ($|u_{i_0}|\geq\alpha\|\vec{v}_{j_1}-\vec{v}_{j_2}\|_2$) and the Cauchy-Schwarz inequality, edge $e_0$ is cut in the right direction with probability
\begin{eqnarray*}
\frac{|u_{i_0}|}{dL} & \geq & \frac{1}{dL}\cdot\alpha\sqrt{u_1^2+u_2^2+\cdots+u_d^2} \\
& \geq & \frac{1}{dL}\cdot\frac{\alpha}{\sqrt{d}}\big(|u_1|+|u_2|+\cdots+|u_d|\big) \\
& = & \frac{\alpha}{\sqrt{d}}\Pr[e_0\in\edg(S_1,S_2)].
\end{eqnarray*}

Since $\vec{v}_{j_1}-\vec{v}_{j_2}$ has at least one non-zero coordinate, $\Pr[e_0\in\edg(S_1,S_2)]$ must be strictly positive. It follows that edge $e_0$ is cut in the right direction with probability strictly greater than
\[\frac{\alpha}{2\sqrt{d}}\Pr[e_0\in\edg(S_1,S_2)].\]
Hence, the expected number of edges that are cut in the right direction is strictly greater than $\alpha\E\big[|\edg(S_1,S_2)|\big]/(2\sqrt{d})$.
%\[\E[\text{number of edges cut in the right direction}]>\frac{\alpha}{2\sqrt{d}}\E\big[|\edg(S_1,S_2)|\big].\]
There must therefore exist an $i\in[d]$, a plane perpendicular to~$\vec{e}_i$ and a corresponding cut $(S_1,S_2)$  which cuts strictly more than $\alpha|\edg(S_1,S_2)|/(2\sqrt{d})$ in the right direction.
%\[\text{number of edges cut in the right direction}>\frac{\alpha}{2\sqrt{d}}|\edg(S_1,S_2)|.\]
Since this number is  non-negative, there must be at least one edge cut in the right direction. This implies that $S_1$ and~$S_2$ are not empty.

All edges cut in the right direction must have the same direction $i$. Hence, these edges are disjoint (they form a matching in $V$), implying that the total number of cut edges is at most $\min\{|S_1|,|S_2|\}$. It follows immediately that
\[|\edg(S_1,S_2)|\leq\frac{2\sqrt{d}}{\alpha}\min\{|S_1|,|S_2|\}.\]
Therefore the cut $(S_1,S_2)$ satisfies the requirement.
\end{proof}

Now using Lemma~\ref{lem:cut} we can conclude that 
$ \delta d n \leq (2\sqrt{d}/\alpha) n \cdot \log_2 n, $ which gives
$n\geq 2^{\alpha\delta\sqrt{d}}$ as required. This completes the proof of Theorem~\ref{thm:general}.

%%%%%%%%%%%%%%%%%%%%%%%%%%
%%%%%%%%%%%%%%%%%%%%%%%%%%
\section{Lower Bound for Simple Codes with Large $\alpha$} \label{sec:largealpha}
%%%%%%%%%%%%%%%%%%%%%%%%%%
%%%%%%%%%%%%%%%%%%%%%%%%%%

In this section we prove Theorem~\ref{thm:bounded}. By Theorem~\ref{thm-reducesimple} it is enough to consider simple codes (the general case will follow by applying Theorem~\ref{thm-reducesimple} with $k$ a sufficiently large constant). We will use the definition and terminology of the graph $G_C$ defined in the last section for simple codes. Hence, we think of pairs in $M_i$ as edges in `direction' $\dir(e) =i$. We define the {\em length} of an edge $e = \{\vec{v}_{j_1},\vec{v}_{j_2}\}$ to be $\|\vec{v}_{j_1} - \vec{v}_{j_2}\|_2$.

We will use a recent  result of \cite{KORW12} concerning a partitioning (or tiling) of $\R^d$. Let $G=\{g\vec{z}\mid\vec{z}\in\Z^d\}$ be the set of grid points with grid distance $g\in\R^+$. Suppose we have a cell containing the origin and no other  points of $G$. We  can attempt to tile the space by taking all the shifts of this cell by all vectors in $G$. Clearly, one can do this using square tiles. However, it was an open problem to find the `most efficient' way of tiling $\R^d$ (in some well defined geometric sense of `efficient'). \cite{KORW12} gives a randomized algorithm outputting the shape of the cell so that the entire space is fully covered and no two cells overlap (thus, it is a tiling) and each cell corresponds to one grid point. Let $C(\vec{x})\in G$ ($\vec{x}\in\R^d$) denote the grid point in the cell containing $\vec{x}$ (so we can think of $C(\vec{x})$ as a `rounding' of $\vec{x}$). \cite{KORW12} proved the following\footnote{\cite{KORW12} only considered $g=1$ but the general result follows by simple scaling.}:

\begin{thm}[\cite{KORW12}]\label{thm-KROW}
There is a randomized algorithm partitioning the whole space $\R^d$ into cells such that
\begin{enumerate}
\item For every $\vec{x}\in\R^d$ and $\vec{s}\in G$, $C(\vec{x}+\vec{s})=C(\vec{x})+\vec{s}$.
\item For every two points $\vec{x},\vec{y}\in\R^d$, $\Pr[C(\vec{x})\neq C(\vec{y})]\leq2\pi\|\vec{y}-\vec{x}\|_2/g$.
\end{enumerate}
\end{thm}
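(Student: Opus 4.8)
The plan is to prove the ``spherical cubes'' statement of \cite{KORW12}. First I would reduce to the case $g=1$: dilating any partition for $g=1$ by a factor $g$ scales both the grid and all Euclidean distances by $g$, hence leaves $\|\vec y - \vec x\|_2/g$ unchanged, so it suffices to produce a random rounding map $C:\R^d\to\Z^d$ with $C(\vec x + \vec s) = C(\vec x)+\vec s$ for all $\vec s\in\Z^d$ and $\Pr[C(\vec x)\neq C(\vec y)]\le 2\pi\|\vec x - \vec y\|_2$. By the periodicity requirement, such a $C$ is exactly the data of a random measurable fundamental domain of $\Z^d$ (the cell containing $\vec 0$), equivalently a random measurable partition of the torus $\mathbb{T}^d := \R^d/\Z^d$ into one piece; the quantity $\Pr[C(\vec x)\neq C(\vec y)]$ is then the probability that a random translate of that cell separates $\vec x$ and $\vec y$.

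I would construct $C$ by a randomized \emph{claiming} process run in rounds $t=1,2,3,\dots$. Before round $t$ some points of $\R^d$ already carry a color in $\Z^d$ and the rest are ``free''. In round $t$ we sample a random body $D_t\subseteq\R^d$ — a uniformly random translate, and (crucially) a randomly chosen shape, e.g.\ a random rotation of a fixed profile — and every lattice translate $\vec z + D_t$ claims all the free points it contains, assigning them the color $\vec z$. Since the randomness of each round is invariant under $\Z^d$-shifts and the rule is translation-covariant, an induction on $t$ gives $C(\vec x+\vec s)=C(\vec x)+\vec s$ for free; for well-definedness one needs that inside a fundamental domain $D_t$ has measure less than $1$ (no point claimed by two lattice translates at once), and one needs the per-round probability $p$ that a fixed free point is claimed to be positive, so that by independence of the rounds almost every point is eventually colored.

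The analysis of $\Pr[C(\vec x)\neq C(\vec y)]$ is then clean: this event forces that, at the first round $t^\star$ in which at least one of $\vec x,\vec y$ is claimed, that round ``splits'' the pair — claims exactly one of them, or claims both but with different colors. Since the rounds are i.i.d.\ and ``both still free before round $t$'' is determined by rounds $1,\dots,t-1$, a geometric-series bound gives $\Pr[C(\vec x)\neq C(\vec y)] \le q/p$ where $q$ is the probability that a single round splits $\vec x,\vec y$. Writing $D_t$ as the super-level set $\{W(\cdot-\vec c_t)>\rho_t\}$ of a profile $W$, one gets $p = \E_{\vec c}[W(\vec x - \vec c)]$ and (up to the negligible ``different colors'' term) $q \approx \E_{\vec c}\,\bigl|W(\vec x - \vec c) - W(\vec y - \vec c)\bigr|$, so the target reduces to
\[
\frac{\E_{\vec c}\,\bigl|W(\vec x - \vec c) - W(\vec y - \vec c)\bigr|}{\E_{\vec c}\,W(\vec x - \vec c)} \;\le\; 2\pi\,\|\vec x - \vec y\|_2 .
\]
The numerator is at most $\|\vec x - \vec y\|_2$ times a \emph{directional} total variation of $W$; averaging over the random rotation of the profile replaces this by roughly $1/\sqrt d$ times the full total variation of $\partial W$, and the isoperimetric near-optimality of the profile's boundary — surface area $O(\sqrt d)$ for a volume-$\approx 1$ body — cancels the $\sqrt d$, leaving a dimension-free constant; the exact value $2\pi$ falls out if one routes the thresholding through the map $x\mapsto e^{2\pi i x}$ and uses $|e^{2\pi i a}-e^{2\pi i b}|\le 2\pi|a-b|$.

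The step I expect to be the main obstacle — and the real content of \cite{KORW12} — is the joint design of the profile $W$: it must be ``round'' enough that a volume-$\approx 1$ body has surface area $O(\sqrt d)$ (so the random-rotation trick brings the directional total variation to $O(1)$), yet compatible with the $\Z^d$ structure, i.e.\ its super-level sets fit in a fundamental domain (for well-definedness) while its periodization still covers $\R^d$ with $p$ bounded away from $0$ (so the process terminates and the $q/p$ bound is not vacuous). A plain ball fails this: balls of volume $1$ do not cover the lattice, and balls large enough to cover have volume $2^{\Theta(d)}$. Reconciling these constraints is exactly where the explicit ``spherical cube'' construction of \cite{KORW12} is needed, and here I would simply invoke that construction rather than rebuild it. Finally I would note that the general $g$ follows by the scaling observed at the start, completing the proof.
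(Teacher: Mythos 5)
The paper does not prove this theorem at all: it is imported verbatim from \cite{KORW12}, and the only ``proof content'' supplied is the footnote observing that the general grid spacing $g$ follows from the $g=1$ case by scaling. Your proposal reproduces exactly that scaling reduction and then, correctly, treats the hard part (the construction of the profile and its surface-area analysis) as a black box to be invoked from \cite{KORW12} rather than rebuilt, so you are taking essentially the same route as the paper; your sketch of the randomized-claiming process and the $q/p$ bound is a fair high-level account of the KORW argument but, as you yourself flag, it is not a self-contained proof and is not meant to be one here.
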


Let $\epsilon\in(0,1)$ and $t\in\Z^+$ be two parameters to be determined later. We partition $\R^+$ into sets $\R^+=I_0\cup I_1\cup\cdots\cup I_{t-1}$, where
\[I_j=\bigcup_{k\in\Z}\left[(1+\epsilon)^{kt+j},(1+\epsilon)^{kt+j+1}\right).\]

For $I\subseteq\R^+$, we say an edge is {\em contained} in $I$ if its length falls in $I$. Without loss of generality we assume $I_0$ is the one among $\{I_0,I_1,\ldots,I_{t-1}\}$ that contains the most edges. We remove all edges not contained in $I_0$. The density $\delta$ is decreased by a factor of at most $t$.

Recall that $I_0=\bigcup_{k\in\Z}\left[(1+\epsilon)^{kt},(1+\epsilon)^{kt+1}\right)$. We say that the {\em level} of an edge is $k$ if it is contained in $\left[(1+\epsilon)^{kt},(1+\epsilon)^{kt+1}\right)$. For an edge $e$, we use $\lev(e)$ to denote its level. Let $k_{\min}$ and $k_{\max}$ to be the minimum level and the maximum level of all edges respectively.

For every integer $k\in[k_{\min},k_{\max}]$ we use Theorem~\ref{thm-KROW} to generate an (independent) random partition with grid distance
\[g_k=\frac{(1+\epsilon)^{kt}+(1+\epsilon)^{kt+1}}{2\alpha}=\frac{(2+\epsilon)(1+\epsilon)^{kt}}{2\alpha},\]
and use $C_k(\vec{x})$ to denote the corresponding rounding function.

Consider an edge $e=\{\vec{v}_{j_1},\vec{v}_{j_2}\}$ and say $\dir(e)=i_0$. We assume $\langle\vec{v}_{j_2}-\vec{v}_{j_1},\vec{e}_{i_0}\rangle>0$. (Otherwise we switch the order of $\vec{v}_{j_1}$ and $\vec{v}_{j_2}$.) We say the edge is {\em good} if the following properties are satisfied:
\begin{enumerate}
\item For $k=\lev(e)$, $C_k(\vec{v}_{j_1}+g_k\vec{e}_{i_0})=C_k(\vec{v}_{j_2})$. Since $C_k(\vec{v}_{j_1}+g_k\vec{e}_{i_0})=C_k(\vec{v}_{j_1})+g_k\vec{e}_{i_0}$, this means that the two cells containing $\vec{v}_{j_1}$ and $\vec{v}_{j_2}$ are  adjacent along the direction $\vec{e}_{i_0}$. 
\item For $k>\lev(e)$, $C_k(\vec{v}_{j_1})=C_k(\vec{v}_{j_2})$. In other words, the two ends are in the same cell.
\end{enumerate}

\begin{center}
\begin{tikzpicture}
\draw (0,0) -- (4,0);
\node at (5,-0.1) {$\vec{v}_{j_1}+g_k\vec{e}_{i_0}$};
\draw (0,0) -- (4,0.3);
\node at (4.3,0.4) {$\vec{v}_{j_2}$};
\node at (-0.5,0) {$\vec{v}_{j_1}$};
\draw[->] (7.5,0) -- (9,0);
\node at (8.25,0.25) {$\vec{e}_{i_0}$};
\end{tikzpicture}
\end{center}

\begin{claim}
Every edge is good with probability at least
\[1-\left(2\pi\sqrt{1-\alpha^2+\left(\frac{\alpha\epsilon}{2+\epsilon}\right)^2}+\frac{4\pi\alpha(1+\epsilon)}{(2+\epsilon)\left((1+\epsilon)^t-1\right)}\right).\]
\end{claim}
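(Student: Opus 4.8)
The plan is to bound the failure probability by a union bound over the two types of bad events described in the definition of ``good'', using the key inequality from Theorem~\ref{thm-KROW}. Fix an edge $e = \{\vec v_{j_1}, \vec v_{j_2}\}$ in direction $i_0$ with $\langle \vec v_{j_2} - \vec v_{j_1}, \vec e_{i_0}\rangle > 0$, and write $k = \lev(e)$, so that $\ell := \|\vec v_{j_2} - \vec v_{j_1}\|_2 \in [(1+\epsilon)^{kt}, (1+\epsilon)^{kt+1})$. The edge fails to be good if either (i) at level $k$ we have $C_k(\vec v_{j_1} + g_k \vec e_{i_0}) \ne C_k(\vec v_{j_2})$, or (ii) for some level $k' > k$ we have $C_{k'}(\vec v_{j_1}) \ne C_{k'}(\vec v_{j_2})$. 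I will estimate the probability of each via property~2 of Theorem~\ref{thm-KROW}.

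For event~(i), the relevant displacement is $\vec w := \vec v_{j_2} - (\vec v_{j_1} + g_k \vec e_{i_0})$. I would compute $\|\vec w\|_2$: writing $\vec v_{j_2} - \vec v_{j_1} = \ell\,(\cos\tau\, \vec e_{i_0} + \sin\tau\, \vec p)$ for a unit vector $\vec p \perp \vec e_{i_0}$, where by simplicity $\cos\tau = \weight_{i_0}(\vec v_{j_2}-\vec v_{j_1}) \ge \alpha$, the component of $\vec w$ along $\vec e_{i_0}$ is $\ell\cos\tau - g_k$ and the orthogonal component has norm $\ell\sin\tau \le \ell\sqrt{1-\alpha^2}$. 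The choice $g_k = (2+\epsilon)(1+\epsilon)^{kt}/(2\alpha)$ is designed so that $g_k \approx \ell/\alpha \cdot (\text{something near }1)$; more precisely $\ell\cos\tau$ ranges over roughly $[\alpha (1+\epsilon)^{kt}, (1+\epsilon)^{kt+1}]$ while $\alpha g_k = (2+\epsilon)(1+\epsilon)^{kt}/2$ sits at the midpoint of $[(1+\epsilon)^{kt},(1+\epsilon)^{kt+1}]$, so $|\ell\cos\tau - g_k|$ is bounded by (half the length of that interval)$/\alpha = \big((1+\epsilon)^{kt+1} - (1+\epsilon)^{kt}\big)/(2\alpha) = \epsilon(1+\epsilon)^{kt}/(2\alpha)$. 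Dividing $\|\vec w\|_2$ by $g_k$ and simplifying, the ratio $\|\vec w\|_2/g_k$ should come out to at most $\sqrt{1-\alpha^2 + (\alpha\epsilon/(2+\epsilon))^2}$ after the algebra: the $\ell\sin\tau$ term contributes $\le \ell\sqrt{1-\alpha^2}/g_k \le$ (since $\ell \le (1+\epsilon)^{kt+1}$ and $g_k = (2+\epsilon)(1+\epsilon)^{kt}/(2\alpha)$, so $\ell/g_k \le 2\alpha(1+\epsilon)/(2+\epsilon) \le \alpha$ roughly, but one must be a touch careful) and the $|\ell\cos\tau - g_k|$ term contributes $\le \epsilon(1+\epsilon)^{kt}/(2\alpha) \cdot 1/g_k = \epsilon/(2+\epsilon) \cdot \alpha$. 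Combining under the square root and applying property~2 gives $\Pr[\text{(i)}] \le 2\pi\sqrt{1-\alpha^2 + (\alpha\epsilon/(2+\epsilon))^2}$, the first term in the claimed bound.

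For event~(ii), at each level $k' > k$ we use $\Pr[C_{k'}(\vec v_{j_1}) \ne C_{k'}(\vec v_{j_2})] \le 2\pi \ell / g_{k'}$, with $\ell \le (1+\epsilon)^{kt+1}$ and $g_{k'} = (2+\epsilon)(1+\epsilon)^{k't}/(2\alpha)$. Summing over $k' = k+1, k+2, \dots$ gives a geometric series in $(1+\epsilon)^{-t}$:
\[
\sum_{k' > k} \frac{2\pi \ell}{g_{k'}} \le \sum_{m \ge 1} \frac{2\pi (1+\epsilon)^{kt+1} \cdot 2\alpha}{(2+\epsilon)(1+\epsilon)^{(k+m)t}} = \frac{4\pi\alpha(1+\epsilon)}{2+\epsilon} \sum_{m\ge 1} (1+\epsilon)^{-mt} = \frac{4\pi\alpha(1+\epsilon)}{(2+\epsilon)\big((1+\epsilon)^t - 1\big)},
\]
which is the second term. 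A union bound over (i) and (ii) then yields the claimed lower bound $1 - (\cdots)$ on the probability that $e$ is good.

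\textbf{Main obstacle.} The routine union bound and geometric sum are straightforward; the delicate part is the exact constant-chasing in the estimate of $\|\vec w\|_2 / g_k$ for event~(i) — in particular verifying that the bound on the orthogonal component $\ell\sin\tau/g_k$ really is at most $\sqrt{1-\alpha^2}$ (not merely $O(\sqrt{1-\alpha^2})$) given that $\ell$ can be as large as $(1+\epsilon)^{kt+1}$ while $\alpha g_k$ is only $(2+\epsilon)(1+\epsilon)^{kt}/2$, so $\ell/g_k$ can slightly exceed $\alpha$ — and ensuring the two contributions combine cleanly under a single square root rather than via a triangle-inequality sum. I would handle this by working directly in the two-dimensional plane spanned by $\vec e_{i_0}$ and $\vec p$, writing $\vec w$ in coordinates there, and computing $\|\vec w\|_2^2 = (\ell\cos\tau - g_k)^2 + \ell^2\sin^2\tau$ exactly before dividing by $g_k^2$ and maximizing over the allowed range of $\ell$ and $\tau$.
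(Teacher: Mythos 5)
Your proposal is correct and follows essentially the same route as the paper: a union bound over the level-$k$ event and the higher-level events, with property~2 of Theorem~\ref{thm-KROW} applied to each, and a geometric series for the latter. The ``delicate part'' you flag resolves exactly as you suggest: expanding $\|\vec w\|_2^2/g_k^2 = (\ell/g_k)^2 - 2(\ell/g_k)\cos\tau + 1 \le (\ell/g_k - \alpha)^2 + 1 - \alpha^2$ (using $\cos\tau\ge\alpha$) and then $|\ell/g_k - \alpha| \le \alpha\epsilon/(2+\epsilon)$ gives the first term without any lossy triangle inequality, which is precisely the computation in the paper.
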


\begin{proof}
We consider the edge $e=\{\vec{v}_{j_1},\vec{v}_{j_2}\}$ with direction $i_0$, and assume $\langle\vec{v}_{j_2}-\vec{v}_{j_1},\vec{e}_{i_0}\rangle>0$. Then $\langle\vec{v}_{j_2}-\vec{v}_{j_1},\vec{e}_{i_0}\rangle\geq\alpha\|\vec{v}_{j_2}-\vec{v}_{j_1}\|_2$ and
\[\frac{\|\vec{v}_{j_2}-\vec{v}_{j_1}\|_2}{g_{\lev(e)}}\in\Big[\frac{2\alpha}{2+\epsilon},\frac{2\alpha(1+\epsilon)}{2+\epsilon}\Big)=\left[\alpha-\frac{\alpha\epsilon}{2+\epsilon},\alpha+\frac{\alpha\epsilon}{2+\epsilon}\right),\]

We consider the probability that $e$ is not a good edge.
\begin{enumerate}
\item For $k=\lev(e)$, we have
\begin{eqnarray*}
\Pr\left[C_k(\vec{v}_{j_1}+g_k\vec{e}_{i_0})\neq C_k(\vec{v}_{j_2})\right] & \leq & 2\pi\|\vec{v}_{j_2}-(\vec{v}_{j_1}+g_k\vec{e}_{i_0})\|_2/g_k \\
%& = & \frac{2\pi}{g_k}\cdot\sqrt{\|\vec{v}_{j_2}-\vec{v}_{j_1}\|_2^2+\|g_k\vec{e}_{i_0}\|_2^2-2\langle \vec{v}_{j_2}-\vec{v}_{j_1},g_k\vec{e}_{i_0}\rangle} \\
& \leq & 2\pi/g_k\cdot\sqrt{\|\vec{v}_{j_2}-\vec{v}_{j_1}\|_2^2+g_k^2-2\alpha g_k\|\vec{v}_{j_2}-\vec{v}_{j_1}\|_2} \\
& = & 2\pi\sqrt{\left(\|\vec{v}_{j_2}-\vec{v}_{j_1}\|_2/g_k-\alpha\right)^2+1-\alpha^2} \\
& \leq & 2\pi\sqrt{1-\alpha^2+\left(\frac{\alpha\epsilon}{2+\epsilon}\right)^2}.
\end{eqnarray*}
\item For $k>\lev(e)$, we have
\begin{eqnarray*}
\Pr\left[C_k(\vec{v}_{j_1})\neq C_k(\vec{v}_{j_2})\right] & \leq & 2\pi\|\vec{v}_{j_2}-\vec{v}_{j_1}\|_2/g_k \\
& \leq & 2\pi\cdot\frac{2\alpha(1+\epsilon)}{2+\epsilon}\cdot\frac{g_{\lev(e)}}{g_k} \\
& = & \frac{4\pi\alpha(1+\epsilon)}{2+\epsilon}\cdot\frac{1}{(1+\epsilon)^{(k-\lev(e))t}}.
\end{eqnarray*}
\end{enumerate}
By union bound, the probability that $e$ is not a good edge is at most
\begin{eqnarray*}
& & 2\pi\sqrt{1-\alpha^2+\left(\frac{\alpha\epsilon}{2+\epsilon}\right)^2}+\sum_{k=\lev(e)+1}^{k_{\max}}\left(\frac{4\pi\alpha(1+\epsilon)}{2+\epsilon}\cdot\frac{1}{(1+\epsilon)^{(k-\lev(e))t}}\right) \\
& < & 2\pi\sqrt{1-\alpha^2+\left(\frac{\alpha\epsilon}{2+\epsilon}\right)^2}+\frac{4\pi\alpha(1+\epsilon)}{(2+\epsilon)\left((1+\epsilon)^t-1\right)}.
\end{eqnarray*}
Thus the claim is proved.
\end{proof}

For any $\alpha>\sqrt{1-1/(4\pi^2)}$, we can always pick $\epsilon$ sufficiently small and $t$ sufficiently large so that each edge is good with positive probability. For example, if $\alpha=0.99$, we can take $\epsilon=0.01$ and $t=500$, in which case each edge is good with probability at least $0.069$. For simplicity, we use $O(\cdot)$ and $\Omega(\cdot)$ to suppress the exact values of constants $\alpha$, $\epsilon$ and $t$. The above claim tells us every edge is good with probability $\Omega(1)$. By a simple expectation argument, there exists a series of space partitions (for every $k\in[k_{\min},k_{\max}]$) such that $\Omega(1)$ fraction of all edges are good. We fix these partitions and remove all edges that are not good. In the remaining code the density is $\Omega(\delta)$.

Next, we prove the lower bound $n=2^{\Omega(\delta d)}$. This follows immediately from the following lemma and Lemma~\ref{lem:cut}.

\begin{lem}
For any $S\subseteq V$ with $|S|\geq2$, there exists a non-trivial cut $(S_1,S_2)$ such that
$|\edg(S_1,S_2)|\leq\min\{|S_1|,|S_2|\}.$
\end{lem}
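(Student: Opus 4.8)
The goal is to exhibit, for every $S\subseteq V$ with $|S|\ge 2$, a non-trivial cut $(S_1,S_2)$ with $|\edg(S_1,S_2)|\le\min\{|S_1|,|S_2|\}$; feeding this into Lemma~\ref{lem:cut} with $c=1$ gives $\delta' dn\le\frac12 n\log_2 n$ and hence $n\ge 2^{\Omega(\delta d)}$ after accounting for the $\Omega(1)$-factor losses in $\delta$ from the two reduction steps above. The plan is to mimic the random-partition argument of Section~\ref{sec:general}, but instead of cutting along a single random hyperplane, use the tilings $C_k$ already fixed in the good-edge preprocessing. As before we may assume $S$ contains at least one edge, and we will produce a cut in which the \emph{only} edges crossing are good edges all having the same direction $i$, so that they form a matching and there are at most $\min\{|S_1|,|S_2|\}$ of them.

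First I would pick a random level-and-direction pair and use the corresponding tiling to define the cut. Concretely: pick a uniformly random direction $i\in[d]$, pick a random level $k$ in the appropriate range, and look at the partition of $S$ induced by the $i$-th coordinate of $C_k(\cdot)$ — that is, group the points of $S$ according to the value of $\langle C_k(\vec v),\vec e_i\rangle$, then cut between a uniformly random pair of consecutive occupied grid-hyperplanes (perpendicular to $\vec e_i$) to get $S_1,S_2$. The key structural point, which is where the good-edge property is used, is that a good edge $e$ in direction $i_0$ and level $k_0$ has its two endpoints landing in cells that are \emph{adjacent along $\vec e_{i_0}$} at level $k_0$ and in the \emph{same} cell at all levels $k>k_0$; so $e$ can only be separated by a cut built from the level-$k$ partition in direction $i$ when $i=i_0$ and $k\le k_0$, and when it is separated it is separated by a single grid-hyperplane step. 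This restricts the edges that cross a given cut to a set of good edges of one common direction $i$, hence to a matching, hence to at most $\min\{|S_1|,|S_2|\}$ of them — giving the bound for any cut produced this way, provided the cut is non-trivial.

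The remaining work is to guarantee that at least one such cut is non-trivial, i.e. that some choice of $(i,k)$ and consecutive-hyperplane pair actually separates $S$ into two nonempty parts. Here I would run an expectation argument analogous to the one in Section~\ref{sec:general}: show that the expected number of crossing edges "cut in the right direction" is strictly positive. For a fixed good edge $e$ of direction $i_0$, level $k_0$, with $\langle\vec v_{j_2}-\vec v_{j_1},\vec e_{i_0}\rangle>0$, conditioned on choosing $i=i_0$ and $k=k_0$ the two endpoints sit in adjacent level-$k_0$ cells along $\vec e_{i_0}$, so among all consecutive-hyperplane cuts in that direction at least one separates them; since $i=i_0,k=k_0$ happens with positive probability and the separating cut is chosen with positive conditional probability, $e$ is crossed with positive probability. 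Because $S$ contains an edge (hence, after the preprocessing, a good edge — WLOG after also intersecting the good-edge set with $\edg(S)$), the expected number of crossed good edges is strictly positive, so some realization has a crossed edge, forcing $S_1,S_2\neq\emptyset$. Fixing that realization gives the desired non-trivial cut.

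I expect the main obstacle to be bookkeeping the interaction between levels: one must be careful that an edge $e$ of level $k_0$ is never separated by a cut derived from a \emph{coarser} tiling $k>k_0$ (this is exactly property 2 of a good edge) and that, when it \emph{is} separated at level $k\le k_0$, the crossing is by a single grid step so the "right direction" edges indeed form a matching rather than merely an edge set of bounded degree. A secondary point to check is that the grid distances $g_k$ and the bucketing into $I_0$ were arranged precisely so that a level-$k_0$ edge has $\ell_2$-length comparable to $g_{k_0}$ and $\langle\vec v_{j_2}-\vec v_{j_1},\vec e_{i_0}\rangle$ close to $g_{k_0}$, which is what makes the "adjacent cell along $\vec e_{i_0}$" conclusion meaningful; but this has already been established in the good-edge claim, so it can be invoked directly. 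Once these points are pinned down the cut bound $|\edg(S_1,S_2)|\le\min\{|S_1|,|S_2|\}$ and hence Theorem~\ref{thm:bounded} follow.
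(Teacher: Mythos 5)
Your approach diverges from the paper's, and it has a genuine gap in the step that controls which edges can cross the cut. You claim that a good edge $e$ of direction $i_0$ and level $k_0$ ``can only be separated by a cut built from the level-$k$ partition in direction $i$ when $i=i_0$ and $k\le k_0$.'' This is not what the good-edge properties give you. Those properties constrain $C_k$ only at level $k=k_0$ (adjacent cells along $\vec e_{i_0}$) and at levels $k>k_0$ (same cell). At coarser levels $k<k_0$ the tilings $C_k$ are completely independent of $e$'s goodness: the two endpoints of $e$ may land in cells whose grid points differ in arbitrary coordinates, so a level-$k$ cut in \emph{any} direction $i$, not just $i_0$, can separate them. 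Consequently, in your random-$(i,k)$ cut, the crossing set can contain good edges of levels $k_0>k$ with directions $\neq i$, and these need not form a matching. The conclusion $|\edg(S_1,S_2)|\le\min\{|S_1|,|S_2|\}$ does not follow, and the randomized nontriviality argument, while correct as far as it goes, is patching the wrong hole.

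The paper avoids this deterministically: among the edges inside $S$, pick one of \emph{maximum} level $k$, say direction $i_0$, and cut $S$ by the $i_0$-coordinate of $C_k(\cdot)$ at the midpoint between the two (adjacent) grid points containing its endpoints. Because $k$ is the maximum level present in $S$, every other edge $e'$ in $\edg(S)$ has $\lev(e')\le k$, and the two goodness properties then force $C_k(\vec v_{j_1}')_{i_0}=C_k(\vec v_{j_2}')_{i_0}$ unless $\dir(e')=i_0$ and $\lev(e')=k$. So every crossing edge has direction $i_0$, hence the crossing edges form a matching, and the chosen max-level edge itself crosses (giving nontriviality without any expectation argument). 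If you replace your random choice of $k$ with this maximum-level choice, the higher-level pollution disappears and the rest of your outline goes through.
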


\begin{proof}
If $S$ contains no edges, an arbitrary partition will satisfy the requirement. Otherwise, we consider the edges in $S$ and pick an edge with the maximum level. Say this edge is $e=\{\vec{v}_{j_1},\vec{v}_{j_2}\}$, and $\dir(e)=i_0$, $\lev(e)=k$. We assume $\langle\vec{v}_{j_2}-\vec{v}_{j_1},\vec{e}_{i_0}\rangle>0$. Then since this edge is good, $C_k(\vec{v}_{j_1})$ and $C_k(\vec{v}_{j_2})$ are adjacent grid points,
\[C_k(\vec{v}_{j_1})+g_k\vec{e}_{i_0}=C_k(\vec{v}_{j_1}+g_k\vec{e}_{i_0})=C_k(\vec{v}_{j_2}).\]
For any point $\vec{v}\in\R$ and $i\in[d]$, we use $C_k(\vec{v})_i$ to denote the $i$'th coordinate of $C_k(\vec{v})$. Let $h=\left[C_k(\vec{v}_{j_1})_{i_0}+C_k(\vec{v}_{j_2})_{i_0}\right]/2$. We define $S_1$ and $S_2$ as follows.
\begin{eqnarray*}
S_1 & = & \{\vec{v}\in S\mid C_k(\vec{v})_{i_0}<h\}, \\
S_2 & = & \{\vec{v}\in S\mid C_k(\vec{v})_{i_0}>h\}.
\end{eqnarray*}
We can see that $S_1$ and $S_2$ are not empty because $\vec{v}_{j_1}\in S_1$ and $\vec{v}_{j_2}\in S_2$. There is no point $\vec{v}$ satisfying $C_k(\vec{v})_{i_0}=h$, because $C_k(\vec{v})$ is a grid point and $h$ is not a multiple of $g_k$. Hence $(S_1,S_2)$ is a non-trivial cut of $S$.

We consider the edges in $\edg(S_1,S_2)$, and show that every edge in $\edg(S_1,S_2)$ must have direction $i_0$. Assume this is not true, and let $e'=\{\vec{v}_{j_1}',\vec{v}_{j_2}'\}$ be such an edge. Say $\dir(e')=i'$ ($i'\neq i_0$). There are two cases.
\begin{enumerate}
\item $\lev(e')=k$. By the first requirement in the definition of good edges,
\[C_k(\vec{v}_{j_1}')+g_k\vec{e}_{i'}=C_k(\vec{v}_{j_1}'+g_k\vec{e}_{i'})=C_k(\vec{v}_{j_2}').\]
\item $\lev(e')<k$. By the second requirement in the definition of good edges,
\[C_k(\vec{v}_{j_1}')=C_k(\vec{v}_{j_2}').\]
\end{enumerate}
In both cases we have $C_k(\vec{v}_{j_1}')_{i_0}=C_k(\vec{v}_{j_2}')_{i_0}$. Hence the edge $e'\notin\edg(S_1,S_2)$.

Therefore all edges in $\edg(S_1,S_2)$ have direction $i_0$. Since the edges of the same direction are disjoint, we have $|\edg(S_1,S_2)|\leq\min\{|S_1|,|S_2|\}$.
\end{proof}

%%%%%%%%%%%%%%%%%%%%%%%%%%
%%%%%%%%%%%%%%%%%%%%%%%%%%
\section{Lower Bound for $c$-bounded Simple Codes} \label{sec:cbounded}
%%%%%%%%%%%%%%%%%%%%%%%%%%
%%%%%%%%%%%%%%%%%%%%%%%%%%

In this section we prove $n=2^{\Omega(\alpha^2\delta^2d/(\log c)^2)}$ for $c$-bounded $(\alpha,\delta)$ simple codes. The following simple lemma shows that it suffices to consider the $2$-bounded codes.

\begin{lem} \label{lem:bdred}
A $c$-bounded $(\alpha,\delta)$ simple code is  a $2$-bounded $(\alpha,\delta')$ simple code for $\delta'\geq\delta/\lceil\log_2c\rceil$.
\end{lem}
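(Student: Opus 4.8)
The plan is to decompose the interval $[1,c]$ of possible edge lengths into dyadic scales and apply a pigeonhole argument over the $d$ directions, exploiting that the density $\delta$ is measured as a sum over all matchings. First I would partition $[1,c)$ into the $\lceil\log_2 c\rceil$ intervals $[2^{0},2^{1}), [2^{1},2^{2}),\ldots$, i.e.\ $J_\ell = [2^\ell, 2^{\ell+1})$ for $\ell = 0,1,\ldots,\lceil\log_2 c\rceil - 1$, so that every edge of the graph $G_C$ (whose lengths lie in $[1,c]$ by $c$-boundedness) falls into exactly one $J_\ell$ (with the right-endpoint $c$ absorbed into the top interval).

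Next I would classify the edges of every matching $M_i$ by which $J_\ell$ contains their length, writing $M_i = M_i^{(0)} \cup \cdots \cup M_i^{(\lceil\log_2 c\rceil - 1)}$. Summing over $i$, the total edge count $\sum_i |M_i| \geq \delta d n$ splits as a sum of $\lceil\log_2 c\rceil$ terms $\sum_i |M_i^{(\ell)}|$, so by averaging there is a fixed scale $\ell^*$ with $\sum_i |M_i^{(\ell^*)}| \geq \delta d n / \lceil\log_2 c\rceil$. I then keep only the edges at scale $\ell^*$, defining the new matchings $M_i' = M_i^{(\ell^*)}$; these are still $q$-matchings (subsets of disjoint pairs), each still spans a vector with $\weight_i \geq \alpha$ (we only removed edges), and the simple-code property $\weight_i(\vec v_{j_2} - \vec v_{j_1}) \geq \alpha$ is inherited. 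Finally, rescale the whole configuration $V$ by the factor $2^{-\ell^*}$: this does not change any $\weight_i$ (which is scale-invariant, being a ratio of a coordinate to a norm), it does not change the matchings, and it maps all remaining edge lengths from $[2^{\ell^*}, 2^{\ell^*+1})$ into $[1,2)$, hence into $[1,2]$. So $(V \cdot 2^{-\ell^*}, M')$ is a $2$-bounded $(\alpha, \delta')$ simple code with $\delta' \geq \delta/\lceil\log_2 c\rceil$ and the same $n, d$.

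I do not expect a genuine obstacle here — the lemma is an elementary bucketing argument. The only point requiring a little care is the bookkeeping: one must check that the density is defined via the \emph{sum} $\sum_i |M_i| \geq \delta d n$ rather than per-matching lower bounds (which it is, by the paper's more general definition), so that the single pigeonhole step over the $\lceil\log_2 c\rceil$ scales suffices without losing a further factor of $d$; and one must note that scale-invariance of $\weight_i$ is exactly what lets the rescaling be harmless. With those observations in place the proof is immediate.
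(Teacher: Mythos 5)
Your argument is correct and is essentially the paper's own proof: the same dyadic decomposition of $[1,c]$ into $\lceil\log_2 c\rceil$ scales, the same pigeonhole over the total edge count $\sum_i |M_i|\geq\delta dn$, and the same rescaling step, with your extra observations (scale-invariance of $\weight_i$, the sum-form of the density definition) being exactly the implicit bookkeeping the paper glosses over.
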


\begin{proof}
We partition the interval $[1,c]$ into $\lceil\log_2c\rceil$ intervals
\[[1,c]=[2^0,2^1)\cup[2^1,2^2)\cup\cdots\cup[2^{\lceil\log_2c\rceil-1},c].\]
By the Pigeonhole Principle, there is an interval that $1/\lceil\log_2c\rceil$ fraction of the edges have lengths in it. We only consider these edges, and scale the points in $V$ to make all these edge lengths in~$[1,2]$. The code becomes $2$-bounded and the density is at least $\delta/\lceil\log_2c\rceil$.
\end{proof}

\subsection{Preliminaries and warm-up}
We let~$\N = \{0,1,2,\dots\}$ and for a vector~$\vec{\sigma}\in\N^d$ we write~$|\vec{\sigma}| = \sigma_1 + \cdots + \sigma_d$.
We denote by~$\E_{\vec{x}\in_{\gamma}\R^d}$ the expectation with respect to a random $d$-dimensional vector $\vec x$ whose entries are independent standard Gaussian random variables.
%Let~$\gamma_d$ denote the standard Gaussian probability measure on~$\R^d$. 
%(with probability density function~$(2\pi)^{-d/2}e^{-\|x\|_2^2/2}$).
%Let $L_2(\R^d,\gamma_d)$ be

We collect some basic facts of the Hermite polynomials (see e.g.,~\cite[Section 6.1]{Andrews:1999}).
These polynomials form a complete orthonormal basis for the Hilbert space of square integrable functions~$f:\R^n\to \C$ endowed with the inner product
%\begin{equation*}\label{eq:innerproduct}
$(f,g) = \E_{\vec{x}\in_{\gamma}\R^d}\big[ \overline{f(\vec{x})}g(\vec{x})\big]$.
%\end{equation*}
The polynomials can be obtained by Gram-Schmidt orthogonalization on the  monomials $x_1^{\sigma_1}\cdots x_d^{\sigma_d}$ for~$\vec{\sigma}\in\N^d$.
Each Hermite polynomial $h_{\vec{\sigma}}\in\R[x_1,\dots,x_d]$ can thus be uniquely represented by a nonnegative integer vector~$\vec{\sigma}$ and the linear ones are $h_{\vec{e}_i}(\vec{x}) = \langle \vec{e}_i,\vec{x}\rangle = x_i.$
We define the Fourier-Hermite coefficients of a function~$f$ by~$\widehat{f}(\vec{\sigma}) = (h_{\vec{\sigma}}, f)$.
Orthonormality easily gives {\em Parseval's identity}\,:
\begin{equation}\label{eq:parseval}
\sum_{\vec{\sigma}\in\N^d}\widehat f(\vec{\sigma})^2 = \E_{\vec{x}\in_{\gamma}\R^d}\big[|f(x)|^2\big].
\end{equation}

We exploit a connection between particular functions related to the Hermite polynomials and the Fourier transform over~$\R^d$.
Recall that the Fourier transform of a function $f$ at a point~$\vec{y}\in\R^d$ is given by
\begin{equation*}
\big(\mathcal F(f)\big)(\vec{y}) = (2\pi)^{-d/2}\int_{\R^d}f(\vec{x})\, e^{-i\langle \vec{x},\vec{y}\rangle}\, d\vec{x}.
\end{equation*}
The fact we use is that the functions $H_{\vec{\sigma}}(\vec{x}) = e^{-\|\vec{x}\|_2^2/2}\,h_{\vec{\sigma}}(\vec{x})$ (known as the Hermite functions) are eigenfunctions of the Fourier transform: they satisfy~$\mathcal F(H_{\vec{\sigma}}) = (-i)^{|\vec{\sigma}|}H_{\vec{\sigma}}$.
In particular this gives the useful identity
\begin{equation}\label{eq:hermfourier}
\E_{\vec{x}\in_{\gamma}\R^d}\Big[h_{\vec{e}_i}(\vec{x})\, e^{-i\langle \vec{x},\vec{y}\rangle}\Big] =
\big(\mathcal F(H_{\vec{e}_i})\big)(\vec{y})= 
-iH_{\vec{e}_i}(\vec{y})=
\frac{-i\,\langle \vec{e}_i,\vec{y}\rangle}{e^{\|\vec{y}\|_2^2/2}}.
\end{equation}

As a `warm-up' to the $2$-query case, we show how one can prove the following simple bound on 1-query LDCs using these tools.

\begin{lem}\label{lem:1ldc}
Let~$(\vec{v}_1,\dots,\vec{v}_n)\in (\R^d)^n$ be a 1-query $(\alpha, \delta)$-approximate LDC.
Then, $d \leq e/(\alpha^2\delta)$.
\end{lem}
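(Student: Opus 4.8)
The plan is to use the Fourier-analytic identity~(\ref{eq:hermfourier}) to extract, for each coordinate~$i$, a lower bound on the norm of a first-level Fourier-Hermite coefficient of a suitable $1$-query-derived function, and then sum these bounds against Parseval's identity~(\ref{eq:parseval}). For a $1$-query code, each matching $M_i$ consists of singletons $\{j\}$ with the property that $\span\{\vec v_j\}$ contains a unit vector of weight at least $\alpha$ in coordinate $i$; since $\span\{\vec v_j\}$ is just the line through $\vec v_j$, this simply says $\weight_i(\vec v_j) = |\langle \vec v_j,\vec e_i\rangle|/\|\vec v_j\|_2 \ge \alpha$, i.e. every $\vec v_j$ appearing in $M_i$ has a large $i$-th coordinate after normalizing. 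Normalizing all the $\vec v_j$ to unit vectors changes nothing here, so assume $\|\vec v_j\|_2 = 1$ for all $j$.

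First I would define a function $f$ that packages all the queried vectors. A natural choice is $f(\vec x) = \frac{1}{\sqrt n}\sum_{j=1}^n c_j\, e^{-i\langle \vec x,\vec v_j\rangle}$ for signs $c_j$, or more cleanly work coordinatewise: for each $j$ set $f_j(\vec x) = e^{-i\langle \vec x,\vec v_j\rangle}$, whose Fourier-Hermite coefficient at level one is, by~(\ref{eq:hermfourier}), $\widehat{f_j}(\vec e_i) = \E_{\vec x\in_\gamma\R^d}[h_{\vec e_i}(\vec x) e^{-i\langle \vec x,\vec v_j\rangle}] = -i\langle \vec e_i,\vec v_j\rangle e^{-\|\vec v_j\|_2^2/2} = -i\, v_{ji}\, e^{-1/2}$. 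Thus $|\widehat{f_j}(\vec e_i)|^2 = v_{ji}^2/e$. Now sum over the contributing singletons: $\sum_{i\in[d]}\sum_{\{j\}\in M_i} |\widehat{f_j}(\vec e_i)|^2 = \frac1e \sum_{i}\sum_{\{j\}\in M_i} v_{ji}^2 \ge \frac{\alpha^2}{e}\sum_i |M_i| \ge \frac{\alpha^2}{e}\,\delta d n$, using $v_{ji}^2 = \weight_i(\vec v_j)^2 \ge \alpha^2$ on the nose.

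On the other hand, for each fixed $j$, Parseval~(\ref{eq:parseval}) applied to $f_j$ gives $\sum_{\vec\sigma\in\N^d}\widehat{f_j}(\vec\sigma)^2 = \E_{\vec x\in_\gamma\R^d}[|f_j(\vec x)|^2] = \E[|e^{-i\langle\vec x,\vec v_j\rangle}|^2] = 1$ since the integrand has modulus $1$. In particular $\sum_{i\in[d]}|\widehat{f_j}(\vec e_i)|^2 \le 1$ for every $j$, hence summing over the at most $n$ distinct indices $j$ appearing across all matchings, $\sum_{j}\sum_{i}|\widehat{f_j}(\vec e_i)|^2 \le n$. (Here one should be slightly careful that a given $j$ may appear in several $M_i$'s; but restricting to the contributing $(i,j)$ pairs only decreases the left side relative to the full sum over all $i$, so the bound $\sum_i\sum_{\{j\}\in M_i}|\widehat{f_j}(\vec e_i)|^2 \le \sum_{j\in[n]}\sum_{i\in[d]}|\widehat{f_j}(\vec e_i)|^2 \le n$ still holds.) Combining the two estimates yields $\frac{\alpha^2}{e}\,\delta d n \le n$, i.e. $d \le e/(\alpha^2\delta)$, as claimed.

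The main obstacle, such as it is, is purely bookkeeping: making sure the double sum over $(i,j)$ pairs (with $j$ possibly repeated) is correctly bounded above by the per-$j$ Parseval sum, and that the sign/complex-modulus manipulations are clean — choosing to work with the individual functions $f_j$ rather than a single aggregated $f$ sidesteps any cross-term issues entirely, since Parseval is applied separately to each $f_j$ whose $L^2(\gamma)$ norm is exactly $1$. No genuine inequality is lost except Cauchy–Schwarz-free bounds, so the constant $e/(\alpha^2\delta)$ comes out exactly.
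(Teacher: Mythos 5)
Your proof is correct and follows essentially the same route as the paper: both use the Hermite--Fourier identity~(\ref{eq:hermfourier}) to compute $|\widehat{f_j}(\vec e_i)|^2 = \langle \vec e_i,\vec v_j\rangle^2/e$, lower bound the sum over the matchings by $\alpha^2\delta dn/e$, and upper bound it by $n$ via Parseval. The paper packages the $f_j$ into a single vector-valued function and applies Parseval coordinatewise, which is exactly your per-$j$ application of~(\ref{eq:parseval}), so the two arguments are the same up to bookkeeping.
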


\begin{proof}
Without loss of generality we may assume that the vectors~$\vec{v}_s$, $s\in[n]$, have unit 2-norm.
Define the {\em vector-valued} function~$f:\R^d\to\C^n$ by $f(\vec{x}) = (e^{-i\langle \vec{v}_s, \vec{x}\rangle})_{s=1}^n$.
Define the (vector-valued) Fourier-Hermite coefficients of~$f$ in the obvious way by $\widehat f(\vec{\sigma}) = \E_{\vec{x}\in_{\gamma}\R^d}[h_{\vec{\sigma}}(\vec{x})f(\vec{x})]$.
Parseval's identity~\ref{eq:parseval} applied to the coordinates of~$f$ gives
\begin{equation}\label{eq:1ldc1}
\sum_{i=1}^d \|\widehat f(\vec{e}_i)\|_2^2 \leq \sum_{\vec{\vec{\sigma}}\in\N^d} \|\widehat f(\vec{\sigma})\|_2^2 = \E_{\vec{x}\in_{\gamma}\R^d}\big[\|f(\vec{x})\|^2_2\big].
\end{equation}
The right-hand side of~\eqref{eq:1ldc1} clearly equals~$n$. 
By~\eqref{eq:hermfourier} the left-hand side is at least
\begin{equation}
\sum_{i=1}^d \|\widehat f(\vec{e}_i)\|_2^2 = 
\sum_{i=1}^d \sum_{s=1}^n \big| \E_{\vec{x}\in_{\gamma}\R^d}\big[h_{\vec{e}_i}(\vec{x})e^{-i\langle \vec{v}_s, \vec{x}\rangle}\big] \big|^2 = 
\sum_{i=1}^d \sum_{s=1}^n \frac{|\langle \vec{e}_i, \vec{v}_s\rangle|^2}{e} \geq
\frac{\delta d n \alpha^2}{e},
\end{equation}
where the last inequality follows from the definition of a 1-query approximate LDC.
Putting things together gives $\alpha^2\delta dn/e \leq n$, which implies the result.
\end{proof}

\subsection{A matrix valued function from a 2-query code}

Let $\{\vec{v}_1,\vec{v}_2,\ldots,\vec{v}_n\}$ be a (2-query) $2$-bounded $(\alpha,\delta)$ simple code. 
%Then for every $\{\vec{v}_s,\vec{v}_t\}$ pair in $M_i$, we have
%\begin{equation} \label{eq:wproperties}
%\|\vec{v}_s - \vec{v}_t\|_2 \leq 2 \quad\quad\text{and}\quad\quad|\langle \vec{e}_i, \vec{v}_s - \vec{v}_t\rangle| \geq \alpha\|\vec{v}_s - \vec{v}_t\|_2 \geq \alpha.
%\end{equation}
We define the vector-valued function $f:\R^d\to \C^n$ given by  $f(\vec{x}) = ( e^{-i\langle \vec{x}, \vec{v}_s\rangle})_{s=1}^n$.
And from $f$ we define the matrix-valued function $F(\vec{x}) = f(\vec{x})f(\vec{x})^*$, so
\[F(\vec{x})=\left(e^{-i\langle \vec{x}, \vec{v}_s - \vec{v}_t\rangle}\right)_{s,t=1}^n.\]
Note that each~$F(\vec{x})$ is a Hermitian matrix with rank~1. 
%The function~$F(x)_{s,t}$ is a stationary plane wave whose spatial frequency in the~$e_i$-direction is~$\langle e_i,v_s - v_t\rangle$.
%We collect a few basic facts of Hermite polynomials and (not-so-basic facts of) the Schatten-1 norm in Sections~\ref{sec:bdhermite} and~\ref{sec:bdinequality}. Then we prove the lower bound for $2$-bounded simple codes in Section~\ref{sec:bdproof}.
%
%\subsection{Hermite polynomials and matchings of large coordinates} \label{sec:bdhermite}
We define the matrix-valued Fourier-Hermite coefficients~$\widehat{F}(\vec{\sigma})$ in the obvious way by $\widehat{F}(\vec{\sigma})_{s,t} = ( h_\sigma, F_{s,t})$, where~$F_{s,t}$ is the function corresponding to the $(s,t)$-coordinate of~$F$.

%Orthonormality gives the inversion formula
%\beqn
%f(x) = \sum_{\sigma\in\N^d}\widehat{f}(\sigma) h_\sigma(x).
%\eeqn

%Recall from above that we defined~$F(\vec{x})$ to be the matrix whose~$(s,t)$-entry is given by~$e^{-i\langle \vec{x},\vec{v}_s -\vec{v}_t\rangle}$.
By~\eqref{eq:hermfourier}~$\widehat{F}(\vec{e}_i)$ therefore has as~$(s,t)$-entry given by
\begin{equation*}
\widehat{F}(\vec{e}_i)_{s,t}  
%= \int_{\R^d} h_{\vec{e}_i}(\vec{x})\, F(\vec{x})_{s,t}\, d\gamma_d(\vec{x})
= \E_{x\in_\gamma\R^d} \Big[h_{\vec{e}_i}(\vec{x}) e^{-i\langle \vec{x}, \vec{v}_s - \vec{v}_t\rangle}\, \Big]
%&=& h_{e_i}(v_t - v_s)\\
\stackrel{\eqref{eq:hermfourier}}{=}  \frac{-i\langle \vec{e}_i, \vec{v}_s - \vec{v}_t\rangle}{e^{\|\vec{v}_s - \vec{v}_t\|_2^2/2}}.
\end{equation*}

Since the code is simple and 2-bounded there are~$|M_i|$ disjoint $\{s,t\}$-pairs such that the~$(s,t)$-entry of~$\widehat F(\vec{e}_i)$ has magnitude
\begin{equation}\label{eq:decodingmatrices}
|\widehat F(\vec{e}_i)_{s,t}| \geq \frac{|\langle \vec{e}_i, \vec{v}_s - \vec{v}_t\rangle|}{e^{\|\vec{v}_s - \vec{v}_t\|_2^2/2}} \geq \frac{\alpha}{e}.
\end{equation}
The matrix~$\widehat F(\vec{e}_i)$ thus has large entries (in absolute values) on the coordinates corresponding to the matching~$M_i$.

\subsection{A Fourier inequality for the trace norm} \label{sec:bdinequality}

We now establish a matrix analog of~\eqref{eq:1ldc1} (Lemma~\ref{lem:traceineq} below), which is expressed in terms of the Schatten-1 norm (or trace norm).
%We now establish a useful matrix inequality (Lemma~\ref{lem:traceineq} below), which is expressed in terms of the Schatten-1 norm (or trace norm).
The {\em Schatten-1 norm}~$\|A\|_{S_1}$ of a complex matrix~$A$ is defined as the sum of its singular values.
We also use the following dual characterization of this norm.
For a pair of matrices~$A,X\in\C^{n\times n}$ let~$\langle A,X\rangle = \tr[A^*X]$ denote their trace inner product, where~$A^*$ denotes the conjugate transpose of~$A$.
The {\em spectral norm} $\|X\|_{S_\infty}$ of a matrix~$X$ is defined as its maximum singular value.
We have the  well-known duality characterization
\begin{equation}\label{eq:S1dual}
\|A\|_{S_1} = \max\{|\langle A, X\rangle| :\, \|X\|_{S_\infty}\leq 1\}.
\end{equation}
The analog of~\eqref{eq:1ldc1} that allows us to prove the lower bound on $2$-bounded simple codes is as follows.
%The key inequality that allows us to prove the lower bound on $2$-bounded simple codes is as follows.

\begin{lem}\label{lem:traceineq}
Let $F:\R^d\to\C^{n\times n}$ be a Hermitian matrix-valued function.
Then,
\begin{equation*}
\left(\sum_{i=1}^d \big\| \widehat{F}(\vec{e}_i)\big\|_{S_1}^2  \right)^{1/2}
\leq
\sqrt{2\log (2en)}\left(\E_{\vec x\in_{\gamma} \R^n} \Big[\|F(\vec{x})\|_{S_1}^2 \Big]\right)^{1/2}.
\end{equation*}
\end{lem}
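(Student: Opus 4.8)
The plan is to prove the inequality by a duality argument mirroring the vector-valued warm-up (Lemma~\ref{lem:1ldc}), but now working with the Schatten-1 norm via its dual characterization~\eqref{eq:S1dual}. First, for each~$i\in[d]$ pick a matrix~$X_i\in\C^{n\times n}$ with~$\|X_i\|_{S_\infty}\leq 1$ that attains the dual norm, i.e.~$\|\widehat F(\vec{e}_i)\|_{S_1} = \langle \widehat F(\vec e_i), X_i\rangle = \tr[\widehat F(\vec e_i)^* X_i]$ (this quantity is real and nonnegative by choice of~$X_i$; since~$F$ is Hermitian we may also take each~$X_i$ Hermitian). Because the Fourier-Hermite coefficient is computed entrywise, we can pull the expectation outside the trace: $\langle \widehat F(\vec e_i), X_i\rangle = \E_{\vec x\in_\gamma\R^d}\big[ h_{\vec e_i}(\vec x)\, \langle F(\vec x), X_i\rangle\big]$, using that~$h_{\vec e_i}$ is real. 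This rewrites the left-hand side squared-and-summed as a sum over~$i$ of squared Gaussian expectations of the scalar functions~$g_i(\vec x) := \langle F(\vec x), X_i\rangle$.

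The key step is then to apply Parseval/Bessel to these scalar functions: each squared expectation~$\big(\E[h_{\vec e_i}(\vec x) g_i(\vec x)]\big)^2 = \widehat{g_i}(\vec e_i)^2$ is one Fourier-Hermite coefficient of~$g_i$, so it is bounded by the full~$L^2$ mass~$\E_{\vec x\in_\gamma\R^d}[|g_i(\vec x)|^2]$. Now~$|g_i(\vec x)| = |\langle F(\vec x), X_i\rangle| \leq \|F(\vec x)\|_{S_1}\|X_i\|_{S_\infty} \leq \|F(\vec x)\|_{S_1}$, so naively summing over~$i$ would cost a factor of~$d$ rather than~$\log(2en)$. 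To get the logarithmic bound instead, I expect one cannot use the crude~$L^2$ bound directly for every~$i$; instead the idea is to exploit that~$\sum_i \widehat{g_i}(\vec e_i)^2$ should be controlled by a single~$L^2$-type quantity of a combined object. The natural move is: let~$\beta_i = \|\widehat F(\vec e_i)\|_{S_1}$ and normalize, considering the random variable and using that~$h_{\vec e_i}(\vec x) = x_i$ are the coordinates of a standard Gaussian vector; then~$\sum_i \widehat{g_i}(\vec e_i)^2 = \sum_i \big(\E[x_i g_i(\vec x)]\big)^2$, and by Cauchy--Schwarz this is~$\E\big[\langle \vec x, \vec b(\vec x)\rangle\big]$-type... more precisely, writing~$\vec w = (\widehat{g_i}(\vec e_i))_{i=1}^d$, we have~$\|\vec w\|_2^2 = \E_{\vec x}[\langle \vec x, \vec w\rangle\, \tilde g(\vec x)]$ is not quite right since the~$g_i$ differ. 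The cleaner route: bound~$\sum_i \beta_i^2$ by pulling out~$\max_i|g_i(\vec x)|$ and using that the relevant sum of squares is dominated by~$\E_{\vec x}\big[\|F(\vec x)\|_{S_1}^2 \cdot \|\vec x\|_{\text{(restricted)}}^2\big]$ where a large-deviation / sub-Gaussian tail bound on~$\max$-type statistics of~$\vec x$ supplies the~$\sqrt{\log n}$.

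More concretely, the mechanism I would use for the~$\log(2en)$ factor is a truncation/Chernoff argument: the operator~$X_i$ lives in the~$n^2$-dimensional matrix space, and~$\langle F(\vec x), X_i\rangle$ is a fixed linear functional of bounded dual norm applied to~$F(\vec x)$; one shows that for a Gaussian~$\vec x$, the vector~$(\langle F(\vec x), X_i\rangle h_{\vec e_i}(\vec x))_i$ concentrates so that its contribution beyond a threshold of order~$\sqrt{\log n}\cdot \|F(\vec x)\|_{S_1}$ is negligible. Equivalently, decompose the Gaussian expectation~$\E[x_i g_i(\vec x)]$ by splitting on the event~$|x_i| \leq T$ versus~$|x_i| > T$ with~$T \approx \sqrt{2\log(2en)}$; on the bounded part one gets a factor~$T$ times~$\E[\|F(\vec x)\|_{S_1}^2]^{1/2}$ after Cauchy--Schwarz and summing the~$d$ terms using~$\|X_i\|_{S_\infty}\le1$ together with~$\sum_i |\langle F(\vec x),X_i\rangle|^2 \le n\,\|F(\vec x)\|_{S_1}^2$ (since each term is at most~$\|F(\vec x)\|_{S_1}^2$ and there are effectively~$n$ relevant directions because~$X_i$ can be taken rank-one-ish from the extremal structure of the spectral-norm ball), and the tail part is controlled by the Gaussian tail~$\Pr[|x_i|>T]\le e^{-T^2/2}$ times the trivial bound, chosen to make the total~$O(1/n)$.

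The main obstacle, I expect, is precisely getting the~$\sqrt{\log(2en)}$ rather than~$\sqrt{d}$: this requires identifying the right way to aggregate the~$d$ dual witnesses~$X_i$ into a single bounded-spectral-norm object (or a low-complexity family) so that a union bound / sub-Gaussian tail over only~$\mathrm{poly}(n)$ effective parameters applies, rather than over all~$d$ coordinates independently. The clean statement is likely that~$\big(\sum_i \beta_i^2\big)^{1/2} = \big(\sum_i \langle \widehat F(\vec e_i), X_i\rangle^2\big)^{1/2}$ can be written as~$\E_{\vec x}\big[ \langle \vec x, \vec c\rangle \langle F(\vec x), X_{I(\vec x)}\rangle\big]$ for a unit vector~$\vec c$ and an index depending measurably on~$\vec x$, at which point Cauchy--Schwarz leaves~$\E_{\vec x}\big[\langle \vec x,\vec c\rangle^2 \|F(\vec x)\|_{S_1}^2\big]^{1/2}$, and the~$\log$ enters through~$\E[\langle \vec x,\vec c\rangle^2 \mathbf{1}[\text{large}]]$ being exponentially small — but marrying this with the freedom in choosing~$X_i$ per-$i$ is the delicate point, and I would spend most of the effort making that reduction rigorous, likely by a hybrid/layer-cake argument on the spectral decomposition of~$F(\vec x)$.
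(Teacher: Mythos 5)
Your setup is right: introduce dual witnesses $X_i$ with $\|X_i\|_{S_\infty}\le 1$ and $\langle \widehat F(\vec e_i),X_i\rangle = \|\widehat F(\vec e_i)\|_{S_1}$, pull the Gaussian expectation outside the trace, and observe that applying Parseval to each scalar function $g_i(\vec x)=\langle F(\vec x),X_i\rangle$ separately only gives a factor of $d$. But you never land on the device that actually produces the $\log(2en)$, and the alternatives you sketch do not work as stated.

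The paper's key move is to set $Y_i = \|\widehat F(\vec e_i)\|_{S_1}\,X_i$ and consider the single random Hermitian matrix $\sum_{i=1}^d x_i Y_i$. Pairing it with $F(\vec x)$ and taking the expectation gives exactly $\sum_i\|\widehat F(\vec e_i)\|_{S_1}^2$ on one side, while trace-norm/spectral-norm H\"older plus Cauchy--Schwarz leaves $\big(\E\|F(\vec x)\|_{S_1}^2\big)^{1/2}\cdot\big(\E\big\|\sum_i x_i Y_i\big\|_{S_\infty}^2\big)^{1/2}$. The logarithm then enters through the non-commutative Khintchine inequality (Theorem~\ref{thm:nckhintchine}), which bounds $\big(\E\big\|\sum_i x_iY_i\big\|_{S_\infty}^2\big)^{1/2}$ by $\sqrt{2\log(2en)}\,\big\|\sum_i Y_i^2\big\|_{S_\infty}^{1/2}\le\sqrt{2\log(2en)}\,\big(\sum_i\|Y_i\|_{S_\infty}^2\big)^{1/2}$, which after the normalization $\sum_i\|\widehat F(\vec e_i)\|_{S_1}^2=1$ is at most $\sqrt{2\log(2en)}$. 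That aggregation of all $d$ witnesses into one Gaussian matrix series, and the matrix-Khintchine/spectral-norm concentration for that series, is precisely the step your sketch is groping for but does not identify.

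Your proposed truncation $|x_i|\le T$ with $T\approx\sqrt{2\log(2en)}$ does not escape the factor of $d$: after Cauchy--Schwarz on the truncated part you still sum $d$ terms of size up to $\E[\|F(\vec x)\|_{S_1}^2]$, and the threshold only helps the (already negligible) tail. The claim that $\sum_i|\langle F(\vec x),X_i\rangle|^2\le n\,\|F(\vec x)\|_{S_1}^2$ because there are ``effectively $n$ relevant directions'' is unjustified --- the $X_i$ are $d$ arbitrary contractions and need not share a low-dimensional structure. The suggested ``choose $I(\vec x)$ measurably and Cauchy--Schwarz against $\langle\vec x,\vec c\rangle$'' also does not recover $\sum_i\beta_i^2$ without re-introducing the union over $d$ indices. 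In short, the missing ingredient is the non-commutative Khintchine inequality applied to $\sum_i x_iY_i$, and without it the logarithmic dependence on $n$ is not obtained.
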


%The proof uses a concentration bound for the spectral norm of random sums of Hermitian matrices due to Ahlswede and Winter~\cite[Appendix]{Ahlswede:2002}.% (see also~\cite[Eq.~(4.20)]{Tropp:2012} for a friendly  exposition).
%
%\begin{thm}[Ahlswede-Winter]\label{thm:ncconcentration}
%For a collection of Hermitian matrices~$A_1,\dots,A_d\in\C^{n\times n}$ and i.i.d.\ standard Gaussian random variables $x_1,\dots,x_d$, we have
%\begin{equation}\label{eq:nc-concentration}
%\Pr\left[\Big\|\sum_{i=1}^d x_i\, A_i\Big\|_{S_\infty} \geq t\right]\leq 2n\, e^{-t^2/(2\sigma^2)},\quad\quad \sigma^2= \sum_{i=1}^d\|A_i^2\|_{S_\infty}.
%\end{equation}
%\end{thm}
%
%%Notice that the case~$n=1$ gives the standard concentration inequality for weighted sums of independent normal random variables.
%A corollary of this result is the following non-commutative version of the Khintchine inequality, from which Lemma~\ref{lem:traceineq} follows with little effort.

The proof of this lemma relies on the following non-commutative version of the Khintchine inequality~\cite[Section~4.4]{Tropp:2012}.

\begin{thm}[Non-commutative Khintchine inequality~\cite{Tropp:2012}]\label{thm:nckhintchine}
For any collection of Hermitian matrices~$A_1,\dots,A_d\in\C^{n\times n}$ and i.i.d.\ standard Gaussian random variables~$x_1,\dots,x_d$, we have
\begin{equation*}
\left(\E\left[\Big\|\sum_{i=1}^d x_i\, A_i\Big\|_{S_\infty}^2\right]\right)^{1/2} \leq \sqrt{2\log (2en)}\: \Big\|\sum_{i=1}^d A_i^2\Big\|_{S_\infty}^{1/2}.
\end{equation*}
\end{thm}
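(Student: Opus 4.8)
The plan is to prove this via Tropp's matrix Laplace-transform method, whose Gaussian instance rests on Lieb's concavity theorem. Write $Y = \sum_{i=1}^d x_i A_i$, a random Hermitian matrix, and set $\sigma^2 = \big\|\sum_{i=1}^d A_i^2\big\|_{S_\infty}$; since each $A_i^2 \succeq 0$ the matrix $\sum_i A_i^2$ is positive semidefinite, so $\sigma^2 = \lambda_{\max}\!\big(\sum_i A_i^2\big)$. The goal is to control $\E\big[\|Y\|_{S_\infty}^2\big]$, and the route is: (i) bound the matrix moment generating function $\E\big[\tr\exp(\theta Y)\big]$; (ii) convert this to a Gaussian-type tail bound on $\|Y\|_{S_\infty}$ by a Chernoff argument; (iii) integrate the tail to recover the second moment.

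For step (i) the key ingredients are the scalar Gaussian identity $\E_{x}[\exp(xA)] = \exp(A^2/2)$ for any Hermitian $A$ and standard Gaussian $x$ (expand the matrix exponential and use $\E[x^{2m}] = (2m)!/(2^m m!)$, $\E[x^{2m+1}]=0$), together with Lieb's theorem: for every fixed Hermitian $H$, the map $A \mapsto \tr\exp(H + \log A)$ is concave on the positive-definite cone. Lieb's theorem combined with Jensen's inequality gives, for a random Hermitian $X$ independent of a deterministic Hermitian $H$, the bound $\E_X\big[\tr\exp(H + X)\big] \leq \tr\exp\!\big(H + \log \E_X[\exp X]\big)$. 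Applying this with $X = \theta x_i A_i$ one variable at a time — conditioning on $x_1,\dots,x_{i-1}$ and absorbing the accumulated $\tfrac{\theta^2}{2}A_j^2$ terms into the ``$H$'' part at each step — yields, after $d$ iterations,
\[
\E\big[\tr\exp(\theta Y)\big] \;\leq\; \tr\exp\!\Big(\tfrac{\theta^2}{2}\sum_{i=1}^d A_i^2\Big) \;\leq\; n\, e^{\theta^2\sigma^2/2}
\qquad\text{for all }\theta\in\R,
\]
where the last inequality uses $\sum_i A_i^2 \preceq \sigma^2 I$ so that $\tr\exp(\tfrac{\theta^2}{2}\sum_i A_i^2)=\sum_j e^{\theta^2\mu_j/2}\leq n e^{\theta^2\sigma^2/2}$ with $\mu_j$ the eigenvalues.

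For step (ii), for $t,\theta > 0$ one has $e^{\theta t}\,\mathbf{1}[\lambda_{\max}(Y)\geq t] \leq e^{\theta\lambda_{\max}(Y)} = \lambda_{\max}\!\big(\exp(\theta Y)\big) \leq \tr\exp(\theta Y)$ (the last step since $\exp(\theta Y)\succeq 0$ has all eigenvalues nonnegative); taking expectations, dividing by $e^{\theta t}$, and optimizing at $\theta = t/\sigma^2$ gives $\Pr[\lambda_{\max}(Y) \geq t] \leq n\,e^{-t^2/(2\sigma^2)}$. Since $-Y = \sum_i(-x_i)A_i$ has the same distribution as $Y$ and $\|Y\|_{S_\infty} = \max\{\lambda_{\max}(Y), \lambda_{\max}(-Y)\}$, a union bound gives $\Pr[\|Y\|_{S_\infty} \geq t] \leq 2n\,e^{-t^2/(2\sigma^2)}$. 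For step (iii), writing $\E[\|Y\|_{S_\infty}^2] = \int_0^\infty 2t\,\Pr[\|Y\|_{S_\infty}\geq t]\,dt$, bounding the integrand by $2t$ on $[0,t_0]$ and by $2t\cdot 2n e^{-t^2/(2\sigma^2)}$ on $[t_0,\infty)$ with $t_0 = \sigma\sqrt{2\log(2n)}$ (where the tail bound equals $1$), one gets $\E[\|Y\|_{S_\infty}^2] \leq t_0^2 + 2\sigma^2 = 2\sigma^2\log(2en)$; taking square roots yields the claim. The main obstacle is step (i): everything genuinely rests on Lieb's concavity theorem — equivalently, on the subadditivity of the matrix cumulant generating function — which I would invoke as a black box (citing~\cite{Tropp:2012} and Lieb) rather than reprove, since a self-contained proof of Lieb's inequality is a substantial detour; alternatively the Lieb dependence can be sidestepped by an exchangeable-pairs/Stein-method argument, but that is longer. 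The Gaussian moment identity, the Chernoff step, and the tail integration are all routine once the matrix MGF bound is in hand.
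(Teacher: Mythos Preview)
The paper does not prove this theorem at all: it is quoted as a black box from~\cite{Tropp:2012} and used in the proof of Lemma~\ref{lem:traceineq}. Your proposal is a correct reconstruction of the standard matrix-concentration argument from that reference --- the Lieb/Jensen iteration to bound the matrix MGF, the Chernoff tail bound, and the tail-integration step all go through exactly as you describe, and the splitting point $t_0 = \sigma\sqrt{2\log(2n)}$ gives precisely $\E\|Y\|_{S_\infty}^2 \leq 2\sigma^2\log(2en)$. So there is nothing to compare against in the paper itself; what you wrote is essentially the proof one finds in~\cite{Tropp:2012}.
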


%\begin{proof}
%Define the matrix-valued random variable~$B = x_1A_1 + \cdots x_dA_d$.
%We integrate~\eqref{eq:nc-concentration} to get
%\begin{eqnarray*}
%\E\left[\|B\|_{S_\infty}^2\right] &=& \int_0^\infty \Pr\big[\|B\|_{S_\infty} \geq \sqrt{t}\big]dt\\
%&=& \int_0^{2\sigma^2\log(2n)} \Pr\big[\|B\|_{S_\infty} \geq \sqrt{t}\big]dt + \int_{2\sigma^2\log(2n)}^\infty \Pr\big[\|B\|_{S_\infty} \geq \sqrt{t}\big]dt\\
%&\leq& 2\sigma^2\log (2n) + 2n\int_{2\sigma^2\log(2n)}^\infty e^{-t/(2\sigma^2)}dt\\
%&=& 2\sigma^2\log(2en).
%\end{eqnarray*}
%The result now follows by taking square roots and the simple inequality
%\begin{equation*}
%\sigma^2= \sum_{i=1}^d\|A_i^2\|_{S_\infty} =  \sum_{i=1}^d\|A_iA_i\|_{S_\infty} \leq \sum_{i=1}^d\|A_i\|_{S_\infty}^2.
%\end{equation*}
%\qed
%\end{proof}

\begin{proof}[ of Lemma~\ref{lem:traceineq}]
By homogeneity we may assume that $\|\widehat F(\vec{e}_1)\|_{S_1}^2 + \cdots + \|\widehat F(\vec{e}_d)\|_{S_1}^2 = 1$.
Let~$X_1,\dots,X_d\in\C^{n\times n}$ be such that~$\|X_i\|_{S_\infty}\leq 1$ and $\langle \widehat F(\vec{e}_i), X_i\rangle = \|\widehat F(\vec{e}_i)\|_{S_1}$ for every~$i\in[d]$.
Let~$Y_i = \|\widehat F(\vec{e}_i)\|_{S_1}\, X_i$ and notice that
\begin{equation}\label{eq:Ybound}
\sum_{i=1}^d\|Y_i\|_{S_\infty}^2 \leq \sum_{i=1}^d\|\widehat F(\vec{e}_i)\|_{S_1}^2 = 1.
\end{equation}

We consider the quantity
\begin{equation}\label{eq:bigip}
\E_{\vec x\in_\gamma \R^d}\Big[\Big\langle F(\vec{x}),\sum_{i=1}^d x_i\, Y_i\Big\rangle \Big].
\end{equation}

First, by linearity of the trace function,  it equals
\begin{eqnarray*}
\sum_{i=1}^d\Big\langle \E_{\vec x\in_\gamma \R^d} \big[x_i\, F(\vec{x})\big], Y_i\Big\rangle
= \sum_{i=1}^d \big\langle \widehat F(\vec{e}_i), Y_i\big\rangle
= \sum_{i=1}^d \big\|\widehat F(\vec{e}_i)\big\|_{S_1}^2 = 1.
\end{eqnarray*}

Second, by H\"older's inequality for the trace and spectral norms \cite{Bhatia97} (which follows from~\eqref{eq:S1dual}) %\jnote{Find reference; maybe Bhatia}
and the Cauchy-Schwarz inequality, \eqref{eq:bigip} is at most
\begin{multline*}
\E_{\vec x\in_\gamma \R^d}\Big[\|F(\vec{x})\|_{S_1}\Big\|\sum_{i=1}^d x_iY_i\Big\|_{S_\infty} \Big]
\leq
\left(\E_{\vec x\in_\gamma \R^d}\big[ \|F(\vec{x})\|_{S_1}^2\big]\right)^{1/2} \left(\E_{\vec x\in_\gamma \R^d}\Big[ \Big\|\sum_{i=1}^d x_i\,Y_i\Big\|_{S_\infty}^2\Big]\right)^{1/2}.
\end{multline*}
By Theorem~\ref{thm:nckhintchine}, the triangle inequality and the fact~$\|Y_i^2\|_{S_\infty} \leq \|Y_i\|_{S_\infty}^2$, the last factor is at most
\begin{equation*}
\sqrt{2\log (2en)}\: \Big\|\sum_{i=1}^d Y_i^2\Big\|_{S_\infty}^{1/2}
\leq
\sqrt{2\log (2en)}\left(\sum_{i=1}^d\|Y_i\|_{S_\infty}^2 \right)^{1/2} \leq \sqrt{2\log (2en)}.
\end{equation*}
\end{proof}

\subsection{Lower bound on $2$-bounded codes} \label{sec:bdproof}

We now combine fact~\eqref{eq:decodingmatrices} and Lemma~\ref{lem:traceineq} to lower bound the length of 2-bounded simple codes.
Recall that we defined the matrix-valued function $F(\vec{x}) = f(\vec{x})f(\vec{x})^*$ where~$f(\vec{x}) = \left( e^{-i\langle \vec{x}, \vec{v}_s\rangle} \right)_{s=1}^n$.
Then $F(\vec{x})$ has~$f(\vec{x})$ as an eigenvector with eigenvalue~$n$ (its other eigenvalues being zero). Hence, 
\begin{equation}\label{eq:Fxnorm}
\|F(\vec{x})\|_{S_1} = n.
\end{equation}

Recall from~\eqref{eq:decodingmatrices} that for every~$i\in[d]$ there are~$|M_i|$ disjoint $\{s,t\}$-pairs such that $|\widehat F(\vec{e}_i)_{s,t}| \geq \alpha/e$.
From~\cite[p.~14--15]{Ben-Aroya:2008} it directly follows that
\begin{equation}\label{eq:hatbound}
\|\widehat{F}(\vec{e}_i)\|_{S_1} \geq \frac{\alpha}{e} |M_i|.
\end{equation}
Putting the above facts together  gives
\begin{multline*}
\frac{\alpha\delta\sqrt{d}n}{e}
 \leq  
\frac{\alpha}{e}\frac{1}{\sqrt{d}}\sum_{i=1}^d |M_i|
%\stackrel{\text{Cauchy-Schwarz}}{\leq}
\leq
\left(\frac{\alpha^2}{e^2}\sum_{i=1}^d |M_i|^2\right)^{1/2}
\stackrel{\eqref{eq:hatbound}}{\leq}
\left(\sum_{i=1}^d\|\widehat{F}(\vec{e}_i)\|_{S_1}^2\right)^{1/2}\\[.2cm]
\stackrel{\text{Lemma}~\ref{lem:traceineq}}{\leq}
\sqrt{2\log (2en)}\, \left(\E_{\vec x\in_\gamma \R^d}\big[\|F(\vec{x})\|_{S_1}^2\big]\right)^{1/2}
\stackrel{\eqref{eq:Fxnorm}}{=}
\sqrt{2\log (2en)}\, n,
\end{multline*}
where the second inequality follows from Cauchy-Schwarz.
Hence, $n=2^{\Omega(\alpha^2\delta^2d)}$.

%%%%%%%%%%%%%%%%%%%%%%%%%%%
%%%%%%%%%%%%%%%%%%%%%%%%%%%
\section{Approximate $q$-query Code for General $q$} \label{sec:qquery}
%%%%%%%%%%%%%%%%%%%%%%%%%%%
%%%%%%%%%%%%%%%%%%%%%%%%%%%

In this section we prove Theorem~\ref{thm:qquery} by showing that $n=\Omega((\alpha^2\delta^{1/q}d)^{\frac{q}{q-1}})$ for general $q$-query approximate code. The proof is similar to~\cite{KT00}: we select a random subset of $V$ with size $\Theta(\delta^{-\frac{1}{q}}n^{\frac{q-1}{q}})$, and show that w.h.p it contains a $q$-tuple from at least $\Omega(d)$ matchings (or directions). This will imply the size of the subset is $\Omega(\alpha^2d)$. The lower bound of $n$ follows immediately.

We first note that a subset containing tuples from many different matchings must be large.

\begin{lem} \label{lem:kt}
If a set $S\subseteq V$ contains at least one tuple from $k$ different matchings ($k\leq d$), then $|S|\geq\alpha^2 k$.
\end{lem}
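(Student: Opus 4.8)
\textbf{Proof plan for Lemma~\ref{lem:kt}.}
The plan is to exploit the constraint $\sum_{i\in[d]}\weight_i(\vec u)^2 = 1$ that holds for every nonzero vector $\vec u$. Suppose $S$ contains a $q$-tuple from each of $k$ distinct matchings $M_{i_1},\dots,M_{i_k}$. For each such index $i_j$, pick the corresponding tuple in $S\cap M_{i_j}$, and let $\vec u^{(j)}\in\span$ of that tuple be a unit vector with $\weight_{i_j}(\vec u^{(j)}) = |\langle \vec u^{(j)},\vec e_{i_j}\rangle| \geq \alpha$, as guaranteed by the definition of an approximate LDC. Each $\vec u^{(j)}$ lies in the span of $q$ vectors of $V$ that all belong to $S$, hence $\vec u^{(j)}\in\span(S)$ for every $j$. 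The key point will be to combine the $k$ inequalities $|\langle \vec u^{(j)},\vec e_{i_j}\rangle|\geq\alpha$ into a dimension bound on $\span(S)$, which in turn bounds $|S|$ from below.

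The main step is the following linear-algebra estimate: if $W\subseteq\R^d$ is a subspace and there exist unit vectors $\vec u^{(1)},\dots,\vec u^{(k)}\in W$ together with \emph{distinct} indices $i_1,\dots,i_k\in[d]$ such that $|\langle \vec u^{(j)},\vec e_{i_j}\rangle|\geq\alpha$ for all $j$, then $\dim W \geq \alpha^2 k$. To see this, let $P$ be the orthogonal projection onto $W$; then $\langle \vec u^{(j)},\vec e_{i_j}\rangle = \langle \vec u^{(j)}, P\vec e_{i_j}\rangle$ since $\vec u^{(j)}\in W$, so $\|P\vec e_{i_j}\|_2 \geq |\langle \vec u^{(j)},P\vec e_{i_j}\rangle| \geq \alpha$ by Cauchy--Schwarz. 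Therefore
\[
\dim W = \tr(P) = \sum_{i=1}^d \langle \vec e_i, P\vec e_i\rangle = \sum_{i=1}^d \|P\vec e_i\|_2^2 \geq \sum_{j=1}^k \|P\vec e_{i_j}\|_2^2 \geq \alpha^2 k,
\]
where we used that $P$ is a projection ($P = P^2 = P^*$), so $\langle \vec e_i, P\vec e_i\rangle = \langle P\vec e_i, P\vec e_i\rangle = \|P\vec e_i\|_2^2$, and that the $i_j$ are distinct so the sum over $j$ picks out $k$ distinct terms of the full sum over $i$.

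Applying this with $W = \span(S)$ gives $|S|\geq\dim\span(S)\geq\alpha^2 k$, which is the claim. I do not expect any serious obstacle here; the only thing to be careful about is that the $k$ directions $i_1,\dots,i_k$ are genuinely distinct (which is given, since the $k$ matchings are different) so that the terms $\|P\vec e_{i_j}\|_2^2$ are not double-counted, and that each witnessing vector $\vec u^{(j)}$ really does lie in $\span(S)$ (which holds because its defining $q$-tuple is contained in $S$). An alternative, slightly more hands-on argument avoiding the trace identity would be to apply Bessel's inequality in $W$: for any orthonormal basis $\vec b_1,\dots,\vec b_{\dim W}$ of $W$ and any fixed $j$, $\sum_\ell \langle \vec e_{i_j},\vec b_\ell\rangle^2 = \|P\vec e_{i_j}\|_2^2\geq\alpha^2$; summing over the $k$ distinct $i_j$ and swapping the order of summation bounds $\dim W$ from below by $\alpha^2 k$ in the same way.
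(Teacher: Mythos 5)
Your proof is correct, and it is cleaner than the one in the paper. The paper proceeds by stacking the $k$ unit vectors $\vec u^{(1)},\dots,\vec u^{(k)}$ as rows of a matrix $U$, truncating $U$ to the $k$ special columns so that it becomes a $k\times k$ matrix with diagonal entries at least $\alpha$ in absolute value, and then invoking (via the singular value decomposition) the inequality $\tr(U)^2\leq\rank(U)\,\|U\|_F^2$, together with $\|U\|_F^2\leq k$, to conclude $\rank(U)\geq\alpha^2 k$. Your argument bypasses the SVD and the column truncation entirely by working directly with the orthogonal projection $P$ onto $W=\span(S)$: since each $\vec u^{(j)}\in W$ is a unit vector, Cauchy--Schwarz gives $\|P\vec e_{i_j}\|_2\geq\alpha$, and then $\dim W=\tr(P)=\sum_{i=1}^d\|P\vec e_i\|_2^2\geq\alpha^2 k$. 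Both proofs rest on the same trace-versus-rank principle, but yours is more geometric and avoids the bookkeeping of which columns to keep; the paper's version generalizes a lemma from~\cite{BDWY12} about matrices with large diagonal. One small point worth keeping explicit, as you did: the $k$ matchings being distinct is what guarantees the indices $i_1,\dots,i_k$ are distinct, so the $k$ terms $\|P\vec e_{i_j}\|_2^2$ are genuinely distinct summands of $\tr(P)$.
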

\begin{proof}
By the definition of an approximate LDC, for every tuple $\{{j_1},\ldots,{j_k}\}\in M_i$, there exists a unit vector $\vec{u}\in\spn\{\vec{v}_{j_1},\ldots,\vec{v}_{j_k}\}$ with the $i$'th coordinate at least $\alpha$ in absolute value. We also assume w.l.o.g. that $u_i\geq\alpha$ (otherwise take $-\vec{u}$). Therefore, there exists unit vectors $\vec{u}_1,\vec{u}_2,\ldots,\vec{u}_k\in\spn\{S\}$ such that each of them has a different coordinate at least $\alpha$. Without loss of generality we assume $u_{11},u_{22},\ldots,u_{kk}\geq\alpha$.

To show $|S|\geq\alpha^2 k$, it suffices to show $\rank\{\vec{u}_1,\vec{u}_2,\ldots,\vec{u}_k\}\geq\alpha^2 k$. Let $U$ be the matrix consisting of $\vec{u}_1,\vec{u}_2,\ldots,\vec{u}_k$ as its row vectors. For simplicity we remove the last $n-k$ columns in $U$ if $n>k$. Now $U$ is a square matrix with diagonal elements at least $\alpha$. Let $r$ be the rank of $U$. We only need to show $r\geq\alpha^2 k$. This is a variant of the well-known theorem saying that the rank of a matrix is large if its diagonal elements are larger than the off-diagonal ones. We give the following proof, which is based on the idea of Lemma~3.5 in~\cite{BDWY12} (similar lemmas can be found in most standard texts on matrix analysis).

Let $U=Q\Sigma P^*$ be the singular value decomposition of $U$, where $Q,P$ are unitary matrices. Let $\sigma_1,\sigma_2,\ldots,\sigma_r>0$ be the non-zero singular values of $U$, i.e. $\Sigma=\diag\{\sigma_1,\sigma_2,\ldots,\sigma_r,0,0,\ldots,0\}$. We have
\[(\alpha k)^2=\tr(U)^2=\tr(Q\Sigma P^*)^2=\tr((P^*Q)\Sigma)^2\leq\tr(\Sigma)^2.\]
The last inequality holds since $P^*Q$ is a unitary matrix and every element has absolute value at most $1$. On the other hand,
\[\tr(\Sigma)^2=(\sigma_1+\sigma_2+\cdots+\sigma_r)^2\leq r\cdot(\sigma_1^2+\sigma_2^2+\cdots+\sigma_r)^2=r\cdot\|U\|_F\leq r\cdot k.\]
Combine these two inequalities we have $r\geq\alpha^2 k$.
\end{proof}

Now we can prove the theorem in the same way as in~\cite{KT00}. We pick a random set $S\subseteq V$ of size $\Theta(\delta^{-\frac{1}{q}}n^{\frac{q-1}{q}})$. By Lemma~5 in~\cite{KT00}, $S$ contains tuples in $\Omega(d)$ different directions in expectation. We fix an $S$ that contains tuples in $\Omega(d)$ directions. By Lemma~\ref{lem:kt}, we have
\[\delta^{-\frac{1}{q}}n^{\frac{q-1}{q}}=\Omega(\alpha^2d).\]
The lower bound $n=\Omega((\alpha^2\delta^{1/q}d)^{\frac{q}{q-1}})$ follows immediately.

%%%%%%%%%%%%%%%%%%%%%%%%%%%%%%%%%%%%%%%%%%
%%%%%%%%%%%%%%%%%%%%%%%%%%%%%%%%%%%%%%%%%%

\bibliographystyle{alpha}

\bibliography{ApproxLDCs}

\newcommand{\etalchar}[1]{$^{#1}$}
\begin{thebibliography}{BARdW08}

\bibitem[AAR99]{Andrews:1999}
G.~E. Andrews, R.~Askey, and R.~Roy.
\newblock {\em Special Functions}, volume~71 of {\em Encyclopedia of
  Mathematics and its Applications}.
\newblock Cambridge University Press, 1999.

\bibitem[ADSW12]{ADSW12}
A.~Ai, Z.~Dvir, S.~Saraf, and A.~Wigderson.
\newblock {{S}ylvester-{G}allai theorems for approximate collinearity}.
\newblock Forum of mathematics - Sigma (to appear), 2012.

\bibitem[BARdW08]{Ben-Aroya:2008}
A.~Ben-Aroya, O.~Regev, and R.~de~Wolf.
\newblock A hypercontractive inequality for matrix-valued functions with
  applications to quantum computing and {LDC}s.
\newblock In {\em FOCS'08}, pages 477--486, 2008.

\bibitem[BDWY12]{BDWY12}
B.~Barak, Z.~Dvir, A.~Wigderson, and A.~Yehudayoff.
\newblock {Fractional {S}ylvester-{G}allai theorems}.
\newblock {\em Proceedings of the National Academy of Sciences}, 2012.

\bibitem[BET10]{BET10}
A.~Ben{-}Aroya, K.~Efremenko, and A.~Ta{-}Shma.
\newblock Local list decoding with a constant number of queries.
\newblock In {\em FOCS'10}, pages 715--722, 2010.

\bibitem[BF90]{BeaverF90}
D.~Beaver and J.~Feigenbaum.
\newblock Hiding instances in multioracle queries.
\newblock In {\em STACS'90}, pages 37--48, 1990.

\bibitem[Bha97]{Bhatia97}
R.~Bhatia.
\newblock {\em Matrix Analysis}, volume 169 of {\em Graduate Texts in
  Mathematics}.
\newblock Springer, 1997.

\bibitem[BK95]{Blum-Kannan}
M.~Blum and S.~Kannan.
\newblock Designing programs that check their work.
\newblock {\em J. ACM}, 42(1):269--291, January 1995.

\bibitem[BNR12]{Briet:2012}
J.~Bri\"{e}t, A.~Naor, and O.~Regev.
\newblock Locally decodable codes and the failure of cotype for projective
  tensor products.
\newblock {\em Electronic Research Announcements in Mathematical Sciences
  (ERA-MS)}, 19:120--130, 2012.

\bibitem[CFL{\etalchar{+}}10]{CFL+10}
Y.~M. Chee, T.~Feng, S.~Ling, H.~Wang, and L.~F. Zhang.
\newblock Query-efficient locally decodable codes of subexponential length.
\newblock {\em Electronic Colloquium on Computational Complexity (ECCC)},
  TR10-173, 2010.

\bibitem[DGY11]{DGY11}
Z.~Dvir, P.~Gopalan, and S.~Yekhanin.
\newblock Matching vector codes.
\newblock {\em SIAM J. Comput.}, 40(4):1154--1178, 2011.

\bibitem[DS05]{DS05}
Z.~Dvir and A.~Shpilka.
\newblock Locally decodable codes with 2 queries and polynomial identity
  testing for depth 3 circuits.
\newblock In {\em STOC'05}, pages 592--601, 2005.

\bibitem[DSW12]{DSW12}
Z.~Dvir, S.~Saraf, and A.~Wigderson.
\newblock {Improved rank bounds for design matrices and a new proof of
  {K}elly's theorem.}
\newblock Forum of mathematics - Sigma (to appear), 2012.

\bibitem[DSW13]{DSW13}
Z.~Dvir, S.~Saraf, and A.~Wigderson.
\newblock {Breaking the quadratic barrier for 3-{LCC}s over the reals}.
\newblock Manuscript, 2013.

\bibitem[Efr09]{Efr09}
K.~Efremenko.
\newblock 3-query locally decodable codes of subexponential length.
\newblock In {\em STOC'09}, pages 39--44, 2009.

\bibitem[GKST06]{GKST:2006}
O.~Goldreich, H.~Karloff, L.~J. Schulman, and L.~Trevisan.
\newblock Lower bounds for linear locally decodable codes and private
  information retrieval.
\newblock {\em Computational Complexity}, 15(3):263--296, 2006.

\bibitem[IS10]{IS10}
T.~Itoh and Y.~Suzuki.
\newblock Improved constructions for query-efficient locally decodable codes of
  subexponential length.
\newblock {\em IEICE Transactions on Information and Systems},
  E93-D(2):263--270, 2010.

\bibitem[KdW04]{KdW04}
I.~Kerenidis and R.~de~Wolf.
\newblock Exponential lower bound for 2-query locally decodable codes via a
  quantum argument.
\newblock {\em Journal of Computer and System Sciences}, 69(3):395--420, 2004.

\bibitem[KROW12]{KORW12}
G.~Kindler, A.~Rao, R.~O'Donnell, and A.~Wigdersons.
\newblock Spherical cubes: optimal foams from computational hardness
  amplification.
\newblock {\em Commun. ACM}, 55(10):90--97, October 2012.

\bibitem[KT00]{KT00}
J.~Katz and L.~Trevisan.
\newblock On the efficiency of local decoding procedures for error-correcting
  codes.
\newblock In {\em STOC'00}, pages 80--86, New York, NY, USA, 2000. ACM.

\bibitem[KY09]{KY09}
K.~S. Kedlaya and S.~Yekhanin.
\newblock Locally decodable codes from nice subsets of finite fields and prime
  factors of {M}ersenne numbers.
\newblock {\em SIAM J. Comput.}, 38(5):1952--1969, 2009.

\bibitem[Lip90]{Lipton90}
R.~J. Lipton.
\newblock Efficient checking of computations.
\newblock In {\em STACS'90}, pages 207--215, 1990.

\bibitem[Pra07]{Rag07}
R.~Prasad.
\newblock A note on {Y}ekhanin's locally decodable codes.
\newblock {\em Electronic Colloquium on Computational Complexity (ECCC)},
  TR07-016, 2007.

\bibitem[Tro12]{Tropp:2012}
J.~A. Tropp.
\newblock User-friendly tail bounds for sums of random matrices.
\newblock {\em Foundations of Computational Mathematics}, 12(4):389--434, 2012.

\bibitem[Woo07]{Woo07}
D.~P. Woodruff.
\newblock New lower bounds for general locally decodable codes.
\newblock {\em Electronic Colloquium on Computational Complexity (ECCC)},
  TR07-006, 2007.

\bibitem[Woo12]{Woo10}
D.~P. Woodruff.
\newblock A quadratic lower bound for three-query linear locally decodable
  codes over any field.
\newblock {\em Journal of Computer Science and Technology}, 27(4):678--686,
  2012.

\bibitem[Yek08]{Yek08}
S.~Yekhanin.
\newblock Towards 3-query locally decodable codes of subexponential length.
\newblock {\em Journal of the ACM}, 55(1):1--16, 2008.

\end{thebibliography}

\end{document}